\documentclass{article}
\usepackage[utf8]{inputenc}
\usepackage{amsmath}
\usepackage{amssymb}
\usepackage{float}
\usepackage{graphicx,graphbox}
\usepackage{amsthm}
\usepackage{multicol}
\usepackage{multirow}
\usepackage{comment}
\usepackage{enumerate}
\usepackage{xparse}
\usepackage[margin=1in]{geometry} 
\usepackage{algorithm,algpseudocode,float}

\usepackage{subfig}
\usepackage{hyperref}%
\usepackage{dirtytalk}

\hypersetup{
    colorlinks=true,
    linkcolor= black,
    urlcolor= black,
    citecolor = black
}
\usepackage{hyperref}
\usepackage{wrapfig}
\usepackage{enumitem}

\usepackage[capitalize]{cleveref}[2012/02/15]%

\graphicspath{{./figures/}}

\crefformat{footnote}{#2\footnotemark[#1]#3}

\urlstyle{bold}
\makeatletter
\newenvironment{breakablealgorithm}
  {%
   \begin{center}
     \refstepcounter{algorithm}%
     \hrule height.8pt depth0pt \kern2pt%
     \renewcommand{\caption}[2][\relax]{%
       {\raggedright\textbf{\ALG@name~\thealgorithm} ##2\par}%
       \ifx\relax##1\relax %
         \addcontentsline{loa}{algorithm}{\protect\numberline{\thealgorithm}##2}%
       \else %
         \addcontentsline{loa}{algorithm}{\protect\numberline{\thealgorithm}##1}%
       \fi
       \kern2pt\hrule\kern2pt
     }
  }{%
     \kern2pt\hrule\relax%
   \end{center}
  }
\makeatother

\usepackage[backend=biber,bibencoding=utf8,  %
			url=false,sorting=ynt,style=numeric-comp,  %
			maxbibnames = 99]{biblatex}                 %
\addbibresource{references.bib}

\newtheorem{theorem}{Theorem}[section]
\newtheorem{defn}[theorem]{Definition}
\newtheorem{corollary}[theorem]{Corollary}
\newtheorem{lemma}[theorem]{Lemma}

\usepackage[dvipsnames]{xcolor} 

\newcommand{\R}{\mathbb{R}}
\newcommand{\mc}{\mathrm{mc }}
\newcommand{\rc}{\mathrm{rc }}
\newcommand{\isepssimilar}{\texttt{IS}E\texttt{PSSIMILAR} }
\newcommand{\cX}{\mathcal{X}}
\newcommand{\inv}{^{-1}}

\renewcommand{\epsilon}{\varepsilon}

\begin{document}
\begin{center}
\begin{Large}\textbf{{Comparing Embedded Graphs Using Average Branching Distance}}

\begin{normalsize}
 \begin{multicols}{3}

 \textbf{Levent Batakci} 
 
 Case Western Reserve University
 
 \textit{lab192@case.edu }
 \vspace{1em}

 \textbf{Abigail Branson} 
  
  Union University
  
 \textit{abby.branson@my.uu.edu}
  \vspace{1em}

 \textbf{Bryan Castillo}
  
  Mesa Community College
  
 \textit{bryancastillo98@gmail.com}
  \vspace{1em}

 \textbf{Candace Todd} 
  \vspace{0.25em}
  
  The Pennsylvania State University
  \vspace{0.25em}
  
 \textit{clt5441@psu.edu}
  \vspace{1em}
  
  \columnbreak

  \textbf{Erin Wolf Chambers}
  
  Saint Louis University
  
 \textit{erin.chambers@slu.edu }
  \vspace{1em}
 
 \textbf{Elizabeth Munch} 
  \vspace{0.25em}
  
  Michigan State University
  \vspace{0.25em}
  
 \textit{muncheli@msu.edu}
  \vspace{1em}

 \end{multicols}
 \end{normalsize}

\end{Large}

\end{center}

\begin{abstract}
Graphs drawn in the plane are ubiquitous, arising from data sets through a variety of methods ranging from GIS analysis to image classification to shape analysis. A fundamental problem in this type of data is comparison: given a set of such graphs, can we rank how similar they are, in such a way that we capture their geometric ``shape" in the plane?

In this paper we explore a method to compare two such embedded graphs, via a simplified combinatorial representation called a  \textit{tail-less merge tree} which encodes the structure based on a fixed direction.
First, we examine the properties of a distance designed to compare merge trees called the \textit{branching distance}, and show that the distance as defined in previous work fails to satisfy some of the requirements of a metric.
We  incorporate this into a new distance function called \textit{average branching distance} to compare graphs by looking at the branching distance for merge trees defined over many directions. 
Despite the theoretical issues, we show that the definition is still quite useful in practice by using our open-source code to cluster data sets of embedded graphs.

\end{abstract}

\section{Introduction}

Embedded graphs appear in a wide variety of applications, including map reconstructions, GIS data, shape analysis, and medical imaging.  Such graphs are more than simple abstract representations, since they have geometric information attached, usually in the form of coordinates and edge lengths.
When faced with data of this type, an immediate question to ask is then how to compare them, so that we can apply techniques like clustering, image recognition, or machine learning. 
In essence, we seek a \textit{distance measure}, which given two input graphs, returns a number representing how similar they are.  Ideally, this  value should be zero if they are the same, and take increasing values for increasingly different embedded graphs.  

While solving this problem perfectly is of course difficult in practice, many examples of graph distances do exist in the literature. 
For example, one well-known one is the graph edit distance \cite{Gao2009}, which gives a cost to inserting or removing vertices and edges; the distance is then the minimum possible sequence of edge and vertex operations when transforming one graph into the other.  
Unfortunately, this particular measure does not take any embedding information into account, so two graphs with identical vertices and weights which are embedded in a very different configuration will still have zero distance.

In this paper, we develop and implement a technique to  compare 2-dimensional embedded graphs, such as are generated from GIS data or skeletonization of images.
We note that there are available options which are close to our work, where distances  are specifically built to take the embedding into account \cite{ahmed2014local,cheong2009measuring,biagioni2012inferring,Alt2003,karagiorgou2012vehicle,buchin2017distance}.  
However, we seek to develop a faster pipeline which incorporates graph simplification while still retaining at least part of the geometric embedding information.
More specifically, we compare embedded graphs by replacing them with a simpler graph which still encodes some aspect of the structure. 

Our algorithm fixes a direction in the plane and computes the \textit{merge tree} of the graph, which is a tree with a real valued function representing how the connected components of sub-level sets of the graph changes in that direction; see e.g.~\cref{fig:merge_components}.
To be able to compare merge trees in a meaningful way, we need a distance function that is computable, accurate, comprehensive, and versatile. 
There are many possible options for metrics to compare merge trees specifically, including 
the edit distance \cite{Sridharamurthy2018}, and
interleaving distance \cite{Morozov2013,Munch2019,Gasparovic2019,Yan2019a}.
In this paper, we use the branching distance for merge trees \cite{morozov}.
Then, to avoid bias caused by fixing a single a direction, we rotate the graphs, calculating the merge trees at each rotation, then compare all of the resulting merge trees.
We take the median of these distances, naming this distance as \emph{average branching distance}. 

The branching distance for merge trees \cite{morozov} builds on the idea of a branch decomposition for a broader class of trees, introduced in \cite{Pascucci2004a}.
However, in the course of our work, we have discovered an issue with the definition presented in \cite{morozov}, caused by ignoring the infinite tail present in merge trees. 
The result is that the distance as defined does not satisfy the properties of a metric, and we provide counterexamples to that effect.
We then prove that this issue propogates into the average branching distance, so it also is not a metric, semi-metric, or pseudo-metric.
We also explore its theoretical properties on simple classes of graphs such as convex polygons.

Despite this, we show that the average branch decomposition is still a useful tool for distinguishing embedded graphs in practice. 
To test its utility on practical examples, we tested it with visualizations such as dendrograms and lower dimensional embeddings, using a range of available data sets. 
Our code is posted publicly~\cite{us}.
We found that despite the theoretical issues, the distance still is potentially useful for understanding real data.

\section{Background}
In this section, we give the relevant background terminology and results in graph theory and merge trees, culminating in the definition of the branching distance for merge trees (Defn.~\ref{defn:branchingDistance}).

\subsection{Graph theory}

Here, we provide a brief discussion of relevant terms from traditional graph theory, following \cite{bondymurty}. 

A \textit{graph} $G = (V,E)$ consists of a set of vertices $V = V(G)$,  and a set of unordered pairs of vertices called edges $E = E(G)$.
An edge $e = uv$ is \textit{incident} to either of its endpoints $u$ and $v$. 
An \textit{edge deletion} operation removes an edge from $E(G)$.
A \textit{vertex deletion} operation removes a vertex $v$ from $V(G)$ and removes all edges incident to $v$ from $E(G)$

A \textit{loop} is an edge such that both endpoints are the same vertex. 
Two edges are called \textit{parallel} if their endpoints share the same vertices. 
A \textit{simple graph} has no loops or parallel edges. 
The \textit{degree} of a vertex $v$ is the number of edges incident to $v$, with loops counting as two edges. 

A \textit{subgraph} $H$ of a graph $G$, denoted $H \subseteq G$, is a graph with $V(H) \subseteq V(G)$ and $E(H) \subseteq E(H)$.
Note that a subgraph of $G$ can be obtained  by performing a combination of vertex deletions and/or edge deletions on $G$. 
An \textit{induced subgraph} of $G$ is a subgraph obtained solely by vertex deletions in $G$.
In this case, we say that $H$ is a \textit{subgraph induced by $A$} if $V(H) = A \subseteq V(G)$.
Two graphs $G$ and $H$ are \textit{isomorphic} when there exists a bijective mapping between the vertices of $G$ and $H$ that preserves adjacency.

A \textit{path} in a graph is a sequence of ordered vertices $v_1,\cdots,v_n$ with edges $v_i v_{i+1}$ for $i = 1,\cdots,n-1$.
A graph is \textit{connected} if every pair of vertices can be connected by a path. 
A \textit{connected component} of a graph $G$ is a connected subgraph $H \subseteq G$ which is not contained in a larger connected subgraph. 
A \textit{cycle} is a path whose start and end vertices are the same vertex. 
A graph is \textit{acyclic} if it has no subgraph which is a cycle graph.
A \textit{tree} is a connected acyclic graph, and a \textit{forest} is a graph solely consisting of trees.

We assume we have a \textit{topological graph}, where each edge can be considered as a copy of the unit interval.
Let $G$ be a topological graph and let $f$ be a function on $G$,  $f: G \rightarrow \mathbb{R}$, providing every vertex and point on each edge of the graph with a placement on the real number line. 
See an example in \cref{fig:merge_components}.
For a fixed $a \in \mathbb{R}$, the \textit{sub-level set at $a$} is an induced subgraph $H$ of $G$ such that any vertex with a function value in $(a,\infty)$ is deleted; that is, it is the subgraph induced  by the set of vertices with function value $f(v) \leq a$.
In this case, we write $H=f^{-1}((-\infty,a])$.
The \textit{sublevel set below $a$} is the subgraph induced by the set of vertices with function value $f(v) < a$. 
A component is a set of connected vertices in the sub-level set $A$.

\begin{figure}
        \centering
        \includegraphics[width = \textwidth]{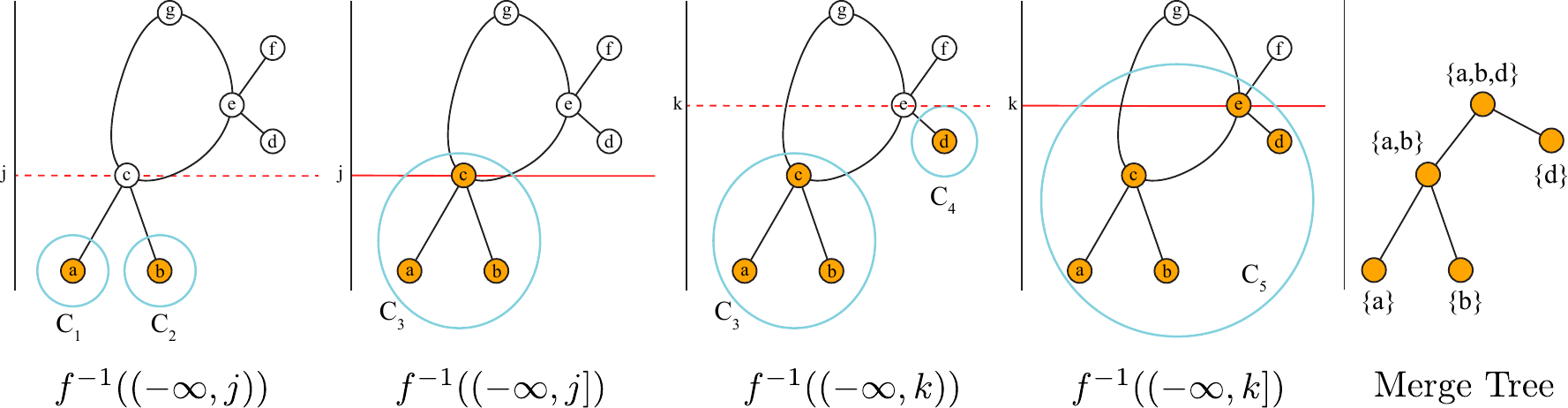}
    \caption{The sublevel sets of a graph below and at two values, $j,k \in \R$; and the tail-lessmerge tree of the graph.
    }%
    \label{fig:merge_components}%
\end{figure}

\subsection{Merge Trees}

A merge tree is a representation of the connected components of sublevel sets of a graph with an $\mathbb{R}$-valued function. 
The vertices of a merge tree represent changes in connectedness of the input graph and its edges encode the relationships between these changes.
In this way, the merge tree serves as a less complex representation of the graph.
What follows is a formal definition of a merge tree closely related to the dendrogram definition of \cite{Carlsson2010a}.

A merge tree $M$ is a structure defined on a graph $G$ with a function $f: G \rightarrow \mathbb{R}$ which encodes the changing connected components of $f \inv(-\infty,a]$ for a shifting value $a$.
For simplicity, we assume the function is entirely determined by values on the vertices $f:V(G) \rightarrow \mathbb{R}$ given by $v \mapsto f(v_i)$ and extended linearly to the edges.
We use the following notation to define merge trees.
Fix a connected set $A \subseteq \mathbb{R}$.
Let $f^{-1}(A)$ denote the induced subgraph of $G$ containing just the vertices $v \in G$ such that  $f(v) \in A$.
Fix a \textit{connected subgraph} $C \subseteq G$. 
Define the \textit{minima} $\mu(C)$ of  $C$ to be the set of vertices in $C$ having no neighbors of lesser function value; that is
    \begin{equation*}
        \mu(C) = \{ v \in V(C) \mid f(v) \leq f(w) \forall (vw) \in E(C) \} .
    \end{equation*}

Let the \textit{identified components} on $A$ be defined as $\Gamma(A) = \{\mu(C) \mid C \text{ is a connected component of } f^{-1}(A) \}$. 
Note that $\Gamma(A)$ is a set of sets and that its elements are in one to one correspondence with the connected components.  
Let the \textit{change in connectedness} $\Delta(a)$ at $a$ be defined as $\Gamma((-\infty, a]) \setminus \Gamma((-\infty,a))$. 
    Note that there is a change in connectedness in the sublevel sets of $G$ at $a$ if and only if $\Delta(a) \neq \emptyset$.

We note that we are defining merge trees as used in \cite{morozov}, which means that we inherited a bug in the definition from what they work with implicitly. 
Normally, a merge tree would have an infinite upwards tail representing the fact that a connected component is always visible in $f\inv(\infty,a]$ for $a$ greater than the maximum value. 
However, \cite{morozov} does not utilize this tail, cutting off the merge tree at the last merge. 
For this reason, we call this construction the \textit{tail-less} merge tree to distinguish it from more standard mathematical constructions.
For the remainder of the paper, we call this construction simply the ``merge tree'' unless there is  potential for confusion.

\begin{defn}[Tail-less Merge Tree]
\label{M} 
\label{defn:MergeTree}
The (tail-less) merge tree $M$ of a graph $G$ with function $f:G\rightarrow\mathbb{R}$ 
is a graph given by 
\begin{equation*}
\begin{split}
V(M) &= \{L \mid L \in \Delta(a), a \in \mathbb{R} \},\\
E(M) &= \{L_1L_2 \mid  L_1 \in \Delta(a) \text{ for } a \in \mathbb{R},  L_2 \in \Gamma((-\infty, a))\\
        & \qquad \qquad \qquad \text{ and } L_2 \subset L_1 \},\\
\end{split}
\end{equation*}
with function defined by
\begin{equation*}
\begin{array}{rccc}
f_M:& V(M) &\rightarrow &\mathbb{R}\\
&L &\mapsto & a \mid L \in \Delta(a).\\
\end{array}
\end{equation*}
\end{defn}

To understand these definitions, consider the example of  Fig.~\ref{fig:merge_components}. 
First, consider function value $j = f(c)$ for vertex $c$.
The highlighted portion  of the leftmost graph represents $f^{-1}((-\infty, j))$, which has only vertices $a$ and $b$. 
The two connected components are identified as $C_1$ and $C_2$.
Furthermore, $\mu(C_1)=\{a\}$ and $\mu(C_2)=\{b\}$.
Thus, $\Gamma((-\infty, j))=\{\{a\}, \{b\}\}$.
The second graph represents $f^{-1}((-\infty, j])$, where the only connected component is identified as $C_3$.
Here, $\mu(C_3)=\{a, b\}$ and $\Gamma((-\infty, j])=\{\{a, b\}\}$.
It follows that $\Delta(j) = \{\{a,b\}\}$.

Now we will compute $\Delta(k)$, where $k=f(e)$ for vertex $e$ by considering the third and fourth graphs in \cref{fig:merge_components}.
We see that $\mu(C_3)=\{a,b\}$ and $\mu(C_4)=\{d\}$.
Thus, $\Gamma((-\infty, k))=\{\{a,b\}, \{d\}\}$
Furthermore, $\mu(C_5)=\{a, b, d\}$ and $\Gamma((-\infty, k])=\{\{a, b,d\}\}$.
It follows that $\Delta(k) = \{\{a,b,d\}\}$.

In the far right of \cref{fig:merge_components}, we show the merge tree for the input graph and function. 
Each vertex $v$ of the merge tree with $f(v)=a$ is labeled with its corresponding element $L \in \Delta(a)$.
Edges are included between vertices $L_1$ and $L_2$ when $L_1 \subset L_2$.

\subsection{Merge tree for a fixed direction}
\label{ssec:DirectionTransformMergeTree}
\begin{figure}
    \centering
\includegraphics[width = \textwidth]{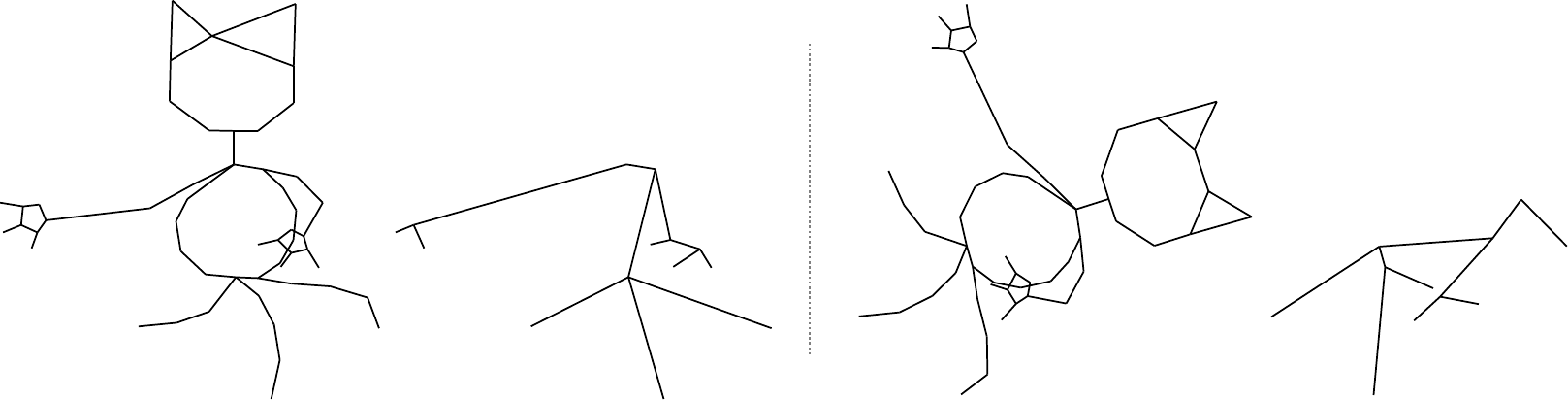}
    \caption{Two (nearly identical) embedded graphs at two different rotations with the associated merge trees.}
    \label{fig:RotatedCat}
\end{figure}
In this paper, we will consider graphs with a map to  $\mathbb{R}^2$. 
In the case that this map is injective, the result is an \textit{embedded graph}.%
\footnote{While we will often be interested in embedded graphs in practice, much of the work in this paper does not require the assumptions of injectivity.}
We can then pick a direction in the plane by fixing an angle $\omega$ for the orientation and calculating function values for a vertex as the magnitude of the projection of the original position vectors onto  $\omega$.
So, from the input $f:G \to \R^2$, we get a function $f_\omega:G \to \R$ for each angle $\omega \in [0,2\pi)$.
We normalize this function by shifting the function values to have median vertex value equal to 0. 
We get the same function if we think of rotating the graph in the plane and computing the height function in the vertical direction, i.e.~$f_\omega(v)$ is given by the $y$-coordinate of $f(v)$.
Then, we can compute the merge tree of the graph for each direction. 
See the example of Fig.~\ref{fig:RotatedCat}.

\subsection{Branching Distance}
In order to compare two merge trees, we use the following distance as defined by Beketayev et al.~\cite{morozov} which we call the \textit{branching distance}.
Note that this definition utilizes the tail-less merge tree construction as defined above.

In what follows, we assume the tail-less merge tree is a non-empty, connected graph.
The root of the merge tree is the vertex with highest function value. 
A merge tree is trivial if it consists of a single vertex. 
For a non-trivial, merge tree $M$, a \textit{saddle} is any vertex with degree $\geq 2$, and a \textit{minimum} is any non-root vertex with degree $1$. 
For a trivial merge tree, the only available vertex is considered to be both the root and a minimum.
A root branch is a pairing of some minimum $m_r$ with the highest function value vertex $s_r$ of the merge tree.

\begin{defn}[Branch Decomposition]

A branch decomposition $B$ of a tail-less merge tree $M$ is collection of pairs  $B = \{ (m_i, s_i)  \}_i$ consisting of a minima $m_i$ and either a saddle or the root $s_i$.
Every vertex appears in at least one pair.
Further, for each pair there is a descending path from the saddle $s_i$ to the minimum $m_i$, no two such paths share an edge, and every edge appears in at least one such path.

\end{defn}

We note that in the case of a trivial merge tree with a single vertex $v$, the only possible branch decomposition consists of the single degenerate branch $\{ (v,v)\}$.
For a non-trivial merge tree in general position (i.e.~the number of edges adjacent and below any vertex is at most 2), then all saddles and minima (with the possible exception of the root) appear in exactly one pair of the branch decomposition.
If a saddle vertex has $k$ edges below it, then it will occur in $k-1$ pairs.
We can then represent these branches and the relationships between them in a graph of their own as follows.

\begin{defn}[Rooted Tree Representation]
A rooted tree representation $R$ of a branch decomposition $B$ is a graph with vertex set given by $V(R) = B$. 
An edge between $(m,s)$ and $(m',s')$ is included in $E(R)$ if and only if one of the saddles, $s$ or $s'$, is on the path between the other pair. 
That is, either $s$ is on the path from $m'$ to $s'$, or $s'$ is on the path from $m$ to $s$. 
We denote the set of all rooted tree representations of a merge tree $M$ by $S_M$. 
\end{defn}

\noindent See \cref{fig:branch_decompositions} for an example of a merge tree with its possible branch decompositions and the associated rooted tree representations. 

Some graphs may have only a single connected component across all non-empty sublevel sets, so we extend the definition of rooted tree representations to include trivial merge trees. 
A trivial merge tree is  a single vertex with the function value of the lowest vertex of the original graph. 
The trivial merge tree's only rooted tree representation is a single vertex with both elements having the function value of the single vertex of the merge tree.

Branching distance utilizes comparisons between rooted tree representations in order to compare merge trees. 
To compare rooted tree representations $R^X$ and $R^Y$ of merge trees $X$ and $Y$, we form a matching from one vertex set to the other. 
We say that an isomorphism of two rooted trees preserves order when it maps children of a vertex in one tree to the children of its image in the other tree.
Then a matching is an order preserving isomorphism $\gamma:M^X \to M^Y$ for subsets of the vertices $M^X \subseteq V(R^X)$ and $M^Y \subseteq V(R^Y)$. 
The vertices in $M^X$ and $M^Y$ are called matched. 
The remaining vertices are  said to be removed, and are denoted by $E^X = V(R^X) \setminus M^X$ and $E^Y = V(R^Y) \setminus M^Y$. 
A matching is valid when the subgraphs of $R^X$ and $R^Y$ induced by $M^X$ and $M^Y$  respectively are trees and at least one root branch for each has not been removed. 

Having fixed a matching $\gamma:M^X \to M^Y$, we now define the cost of the matching by incurring a cost for each vertex.
The cost of matching two vertices $\gamma(u) = v$, where $ u = (m_u,s_u) \in R^X$ and $v = (m_v,s_v) \in R^Y$, is the maximum of the absolute function value difference of their corresponding elements,
$$\mc(u,v) = \max{(|m_u-m_v|,|s_u-s_v|)}.$$
The cost of removing a vertex $u\in R^X \cup R^Y$ is half the absolute function value difference of the elements of the vertex,
$$\rc(u) = |m_u-s_u|/2.$$

We say that two rooted tree representations $R^X$ and $R^Y$ are $\varepsilon$-$similar$ when we can find a valid matching $\gamma:M^X \to M^Y$ where the maximum cost does not exceed $\epsilon$. 
That is,
\begin{equation*} \label{maxmatch}
    \max_{u\in M^X}\mc(u,\gamma(u)) \leq \varepsilon
\qquad \text{and} \qquad
    \max_{u\in E^X\cup E^Y} \rc(u) \leq \varepsilon.
\end{equation*}
Specifically, the smallest $\varepsilon$ for which the above two inequalities hold is denoted 
$$
\varepsilon_{min}(R^X,R^Y) = 
\min_{\gamma:M^X \to M^Y} \max \left \{
\max_{u\in M^X}\mc(u,\gamma(u)), 
\max_{u\in E^X\cup E^Y} \rc(u)
\right \}
$$

\begin{defn}[Branching Distance]
\label{defn:branchingDistance}

 The  branching distance between two merge trees $X$ and $Y$ is the smallest $\varepsilon_{min}$ out of every possible pair of rooted tree representations, namely
\begin{equation} \label{disteq}
    d_B(X,Y) = \min_{R^X \in S_{X},R^Y \in S_{Y}}(\varepsilon_{min}(R^X,R^Y)).
\end{equation}
\end{defn}
A naive approach to computing this distance would result in an exponential time complexity.
Thus, we use the optimized algorithm described in Beketayev et al.~\cite{morozov}.
The function \isepssimilar determines whether two merge trees can be matched within a given tolerance $\varepsilon$ and is the core of the branching distance algorithm.
The runtime complexity of \isepssimilar is $O(N^2M^2(N+M))$, where $N$ and $M$ are the number of leaves of the input trees.
A binary search with a specified error tolerance is performed to determine $\varepsilon_{min}$, the branching distance.

\begin{figure}
    \centering

    \includegraphics[width=.8\textwidth]{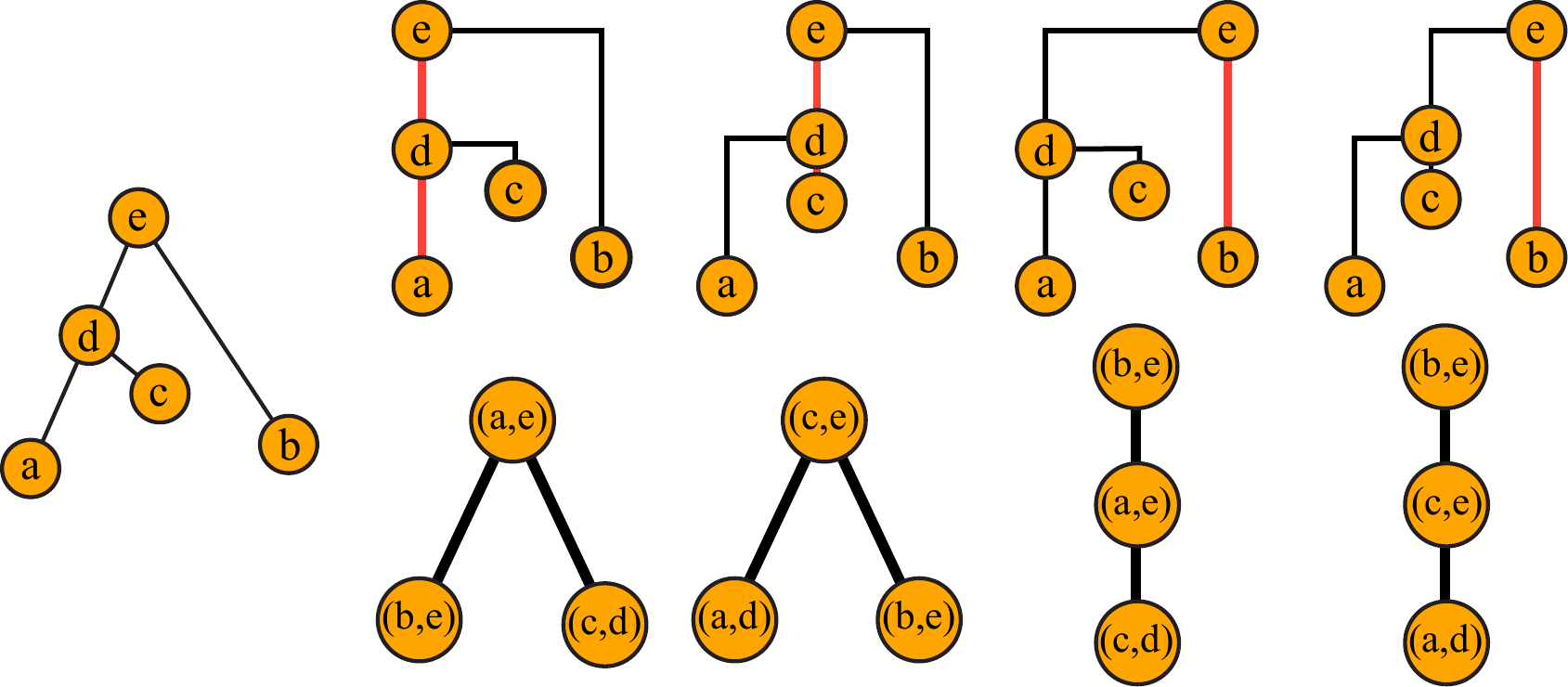}
    \caption{A merge tree $M$ is shown at left. 
    All of $M$'s branch decompositions are shown in top right row, with each root highlighted red. %
    The corresponding rooted tree representations are shown in the bottom right row. 
    }
    \label{fig:branch_decompositions}
    \vspace{-.2in}
\end{figure}

\section{Properties of the Branching Distance}

In this section, we consider and prove properties of the branching distance, showing that it is a semi-metric but not a metric when defined on tail-less merge trees.  
We then define a new distance measure between two geometric graphs that is based on the branching distance, which we call the \emph{average branching distance}.  We prove that the average branching is a distance function, but is not a metric, as two graphs can have average branching distance equal to 0 without being equivalent graphs.

Here, we define distance in the following way, following \cite{setdistancesurvey}. 
\begin{defn}
\label{defn:metric}

Consider a function $d:\cX \times \cX \to \R_{\geq 0}$ on a collection of objects $\cX$. 
Some commonly used properties that such a function might satisfy are: 
\begin{enumerate}[nosep,label=(\roman*)]
    \item Symmetry: for all $x,y \in \cX$,$d(x,y) = d(y,x)$ 
    \item Nonnegativeness: for all $x,y \in \cX$, $d(x,y) \geq 0$ and $d(x,x) = 0$
    \item Positiveness: for all $x,y \in \cX$, $d(x,y) = 0$ implies $x=y$, and 
    \item Triangle inequality: for all $x,y,z \in \cX$, $d(x,y) \leq d(x,z) + d(z,y)$
\end{enumerate}

A function $d:\cX \times \cX \to \R_{\geq 0}$ is called a \emph{distance} if it satisfies (i) and (ii). 
It is a \emph{semi-metric} if it satisfies (i)-(iii), but not necessarily (iv).
It is a \emph{pseudo-metric} if it satisfies (i),(ii), and (iv), but not necessarily (iii).
It is a \emph{metric} if it satisfies (i)-(iv). 

\end{defn}

\subsection{Metric Properties of Branching Distance}
\label{BD:semi-metric}

In this section, we will show that the branching distance on tail-less merge trees satisfies properties (i)-(iii) above, making it a semi-metric. 
However, we will also provide a counter example to the triangle inequality, proving that the branching distance is not a metric. 
Throughout this section, let $X$, $Y$, and $Z$ be merge trees. 
We will abuse notation by writing $m$ and $s$ for both the vertices in the merge tree that they represent, and for the function value of those vertices.
We first show symmetry. 
To show that $d_B(X,Y) = d_B(Y,X)$, we will show that we obtain the same minimum $\varepsilon_{min}$ when comparing $X$ to $Y$ as we obtain when comparing $Y$ to $X$.

\begin{lemma}
\label{lem:sameMineps}
$\varepsilon_{min}(R^X,R^Y) = \varepsilon_{min}(R^Y,R^X)$
\end{lemma}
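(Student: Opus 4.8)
The plan is to show that the quantity $\varepsilon_{min}$ is manifestly symmetric in its two arguments by unwinding the definition and observing that every ingredient is symmetric in $R^X$ and $R^Y$. Recall that
$$
\varepsilon_{min}(R^X,R^Y) =
\min_{\gamma:M^X \to M^Y} \max \left \{
\max_{u\in M^X}\mc(u,\gamma(u)),
\max_{u\in E^X\cup E^Y} \rc(u)
\right \}.
$$
The key observation is that a valid matching $\gamma:M^X \to M^Y$ corresponds bijectively to a valid matching in the reverse direction, namely its inverse $\gamma^{-1}:M^Y \to M^X$. I would first verify that $\gamma^{-1}$ is genuinely a valid matching in the sense required for computing $\varepsilon_{min}(R^Y,R^X)$: since $\gamma$ is an order-preserving isomorphism between the subtrees induced by $M^X$ and $M^Y$, its inverse is as well, and the validity condition (that the induced subgraphs are trees and that a root branch survives on each side) is symmetric in the two trees.

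Next I would check that the cost terms are invariant under this correspondence. For the matching cost, $\mc(u,\gamma(u)) = \max(|m_u - m_{\gamma(u)}|, |s_u - s_{\gamma(u)}|)$ is symmetric in its two vertex arguments because it is built from absolute differences, so $\mc(u,\gamma(u)) = \mc(\gamma(u), \gamma^{-1}(\gamma(u)))$; hence $\max_{u \in M^X} \mc(u,\gamma(u)) = \max_{v \in M^Y} \mc(v, \gamma^{-1}(v))$, as the two maxima run over the same set of pairs. For the removal cost, the set $E^X \cup E^Y$ of removed vertices is literally the same set whether we match $X$ to $Y$ or $Y$ to $X$ under the paired matchings $\gamma$ and $\gamma^{-1}$, and $\rc(u) = |m_u - s_u|/2$ does not reference the direction of the matching at all, so $\max_{u \in E^X \cup E^Y} \rc(u)$ is identical in both cases.

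Combining these, for each valid $\gamma$ the inner maximum of the three-part expression is equal to the corresponding inner maximum for $\gamma^{-1}$. Because $\gamma \mapsto \gamma^{-1}$ is a bijection between the valid matchings $M^X \to M^Y$ and the valid matchings $M^Y \to M^X$, the two minima are taken over sets of equal values and therefore agree, giving $\varepsilon_{min}(R^X,R^Y) = \varepsilon_{min}(R^Y,R^X)$.

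I do not expect any serious obstacle here; the statement is essentially a bookkeeping argument showing that swapping the roles of the two trees and replacing $\gamma$ by $\gamma^{-1}$ leaves every cost unchanged. The one point that requires a little care, and which I would state explicitly rather than gloss over, is the verification that $\gamma^{-1}$ really satisfies the validity conditions for a matching in the reversed direction — in particular that the surviving-root-branch requirement is symmetric — since this is the only place where the definition is not obviously invariant under interchange of the arguments.
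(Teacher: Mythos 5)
Your proposal is correct and follows essentially the same route as the paper's own proof: both arguments rest on the symmetry of $\mc$ under swapping its arguments, the fact that $E^X \cup E^Y$ is the same set in either direction, and the correspondence between a matching $\gamma$ and its inverse. You are somewhat more careful than the paper, which glosses over the bijection $\gamma \mapsto \gamma^{-1}$ and the validity of the inverse matching that you rightly flag as the one point needing explicit verification.
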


\begin{proof}

Let $u = (m_u,s_u) \in R^X$ and $v = (m_v,s_v) \in R^Y$ be vertices. 
Because of the symmetry of the absolute value expressions, we have
\begin{equation*}
\begin{split}
\mc(u,v) & = \max(|m_u-m_v|,|s_u-s_v|)\\
& = \max(|m_v-m_u|,|s_v-s_u|)\\
& = \mc(v,u).
\end{split}        
\end{equation*}

This means that 
$\max_{u \in M^X}\mc(u,\gamma(u)) = \max_{v \in M^Y}\mc(v,\gamma(v))$, 
so we obtain the same $\varepsilon_{min}$ for the matching costs. 
By definition of union, 
$\max_{u \in E^X \cup E^Y}\rc(u) = \max_{u \in E^Y \cup E^X}\rc(u)$, 
so we obtain the same minimum $\varepsilon$ for the removal costs. 
The $\varepsilon_{min}$ are the same regardless of order, so the minimum $\varepsilon$ that satisfies both will be the same regardless of the order we compare the rooted branchings.
\end{proof}

\begin{corollary}
\label{banching_symmetric}
$d_B(X,Y) = d_B(Y,X)$
\end{corollary}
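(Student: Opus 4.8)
The plan is to reduce the statement directly to Lemma~\ref{lem:sameMineps}, which supplies all the genuine content. Recall from Definition~\ref{defn:branchingDistance} that
$$d_B(X,Y) = \min_{R^X \in S_X,\, R^Y \in S_Y} \varepsilon_{min}(R^X,R^Y),$$
so the only features distinguishing $d_B(X,Y)$ from $d_B(Y,X)$ are the order of the two arguments fed to $\varepsilon_{min}$ and the labeling of the index set over which the minimum is taken.

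First I would observe that the index set for $d_B(Y,X)$, namely $S_Y \times S_X$, is in natural bijection with the index set $S_X \times S_Y$ for $d_B(X,Y)$ via the coordinate swap $(R^X,R^Y) \mapsto (R^Y,R^X)$. Under this bijection, the quantity minimized in $d_B(Y,X)$ at the pair $(R^Y,R^X)$ is $\varepsilon_{min}(R^Y,R^X)$, while the quantity minimized in $d_B(X,Y)$ at the matched pair $(R^X,R^Y)$ is $\varepsilon_{min}(R^X,R^Y)$. By Lemma~\ref{lem:sameMineps}, these two numbers are equal for every such pair. I would then conclude that, since the two minimizations run over index sets placed in bijection by the swap and the objective agrees term-by-term under that bijection, the two minima coincide, giving $d_B(X,Y) = d_B(Y,X)$.

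There is no real obstacle here: the symmetry of the per-pair quantity $\varepsilon_{min}$ has already been established in Lemma~\ref{lem:sameMineps}, and the corollary is just the observation that taking a minimum respects a term-by-term equality of the objective over a relabeled domain. The only point requiring any care is to confirm that $S_X$ and $S_Y$ each denote the full set of rooted tree representations of the respective merge tree, independently of which tree is listed first in the distance; once that is noted, the two index sets genuinely match under the swap and the argument is complete.
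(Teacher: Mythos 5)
Your proposal is correct and follows essentially the same route as the paper: reduce to Lemma~\ref{lem:sameMineps} and observe that the minimum over the swapped index set $S_Y \times S_X$ agrees term-by-term with the minimum over $S_X \times S_Y$. Your version merely makes the coordinate-swap bijection explicit, which the paper's proof states more informally.
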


\begin{proof}
Because our function considers all possible edits, and by \cref{lem:sameMineps}, the minimum satisfying $\varepsilon$ is the same regardless of the order of the rooted branchings, the cheapest edit from $X$ to $Y$ will be the cheapest edit from $Y$ to $X$, and the minimum $\varepsilon_{min}$ will be the same.
\end{proof}

We next show nonnegativeness of the branching distance. 

\begin{lemma}
\label{branching_positive}
$d_B(X,Y) \geq 0$
\end{lemma}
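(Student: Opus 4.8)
The plan is to show $d_B(X,Y) \geq 0$ by tracing the definition back to its constituent costs, each of which is manifestly nonnegative. First I would observe that $d_B(X,Y)$ is defined in \cref{disteq} as a minimum of values $\varepsilon_{min}(R^X,R^Y)$ over pairs of rooted tree representations, so it suffices to show that every such $\varepsilon_{min}(R^X,R^Y)$ is nonnegative; a minimum of nonnegative quantities is nonnegative.

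Next I would unpack $\varepsilon_{min}(R^X,R^Y)$, which is itself a minimum over valid matchings $\gamma$ of a maximum of two terms: $\max_{u \in M^X}\mc(u,\gamma(u))$ and $\max_{u \in E^X \cup E^Y}\rc(u)$. Again, a minimum of a maximum of nonnegative quantities is nonnegative, so the whole burden reduces to showing that the individual costs $\mc(u,v)$ and $\rc(u)$ are nonnegative. For the matching cost, $\mc(u,v) = \max(|m_u - m_v|, |s_u - s_v|)$ is a maximum of absolute values, hence $\geq 0$. For the removal cost, $\rc(u) = |m_u - s_u|/2$ is an absolute value scaled by a positive constant, hence $\geq 0$. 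I would also note the edge case of a valid matching for which $E^X \cup E^Y = \emptyset$ or $M^X = \emptyset$, where the corresponding max over an empty set should be taken as $0$ (or simply omitted), so the overall value remains well-defined and nonnegative.

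There is essentially no hard part here: the statement is almost immediate once the nested definition of $d_B$ is peeled apart, since nonnegativity is preserved under each of the operations involved (absolute value, $\max$, $\min$, and multiplication by $1/2$). The only mild subtlety I would be careful about is the empty-maximum convention and confirming that at least one valid matching exists (guaranteed since every merge tree admits a branch decomposition and hence a rooted tree representation, and the validity condition requiring a retained root branch can always be met), so that the minimum in the definition is taken over a nonempty set and $d_B(X,Y)$ is a genuine real number rather than an infimum over the empty set.
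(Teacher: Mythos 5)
Your proposal is correct and follows essentially the same reasoning as the paper's proof, which simply observes that all costs are nonnegative because they are built from absolute value expressions; your version just spells out the nested $\min$/$\max$ structure and the empty-set edge cases more explicitly.
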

\begin{proof}
 The distance between any two graphs is non-negative as all costs are non-negative due to the absolute value expressions.
 \end{proof}

\begin{lemma}
\label{branching_samezero}
$d_B(X,X) = 0$
\end{lemma}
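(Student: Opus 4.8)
The plan is to combine the non-negativity already established in \cref{branching_positive} with a single explicit witness. Since
$$d_B(X,X) = \min_{R^X,R^Y \in S_X} \varepsilon_{min}(R^X,R^Y)$$
is a minimum over all pairs of rooted tree representations of $X$, and since we already know $d_B(X,X) \geq 0$, it suffices to produce one pair attaining the value $0$. The lower bound is exactly \cref{branching_positive}, and the matching upper bound will come from exhibiting a pair with $\varepsilon_{min} = 0$.

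For the witness, I would compare a rooted tree representation of $X$ with itself. Every merge tree admits at least one rooted tree representation $R^X \in S_X$ (including the degenerate single-vertex representation in the trivial case), so I fix such an $R^X$ and set $R^Y = R^X$. The key observation is that the identity map $\gamma = \mathrm{id} : V(R^X) \to V(R^X)$ is an order-preserving isomorphism, hence a legitimate matching, with $M^X = M^Y = V(R^X)$ and therefore $E^X = E^Y = \emptyset$. This matching is valid: the subgraphs induced on $M^X$ and $M^Y$ are all of $R^X$, which is a tree, and every root branch is retained because nothing is removed.

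It then remains to compute the cost of this matching. For each matched vertex $u = (m_u,s_u)$ the identity pairs $u$ with itself, so
$$\mc(u,u) = \max(|m_u-m_u|,|s_u-s_u|) = 0,$$
and since there are no removed vertices the removal term is a maximum over the empty set, which we take to be $0$. Hence the overall cost is $0$, giving $\varepsilon_{min}(R^X,R^X) = 0$. Combined with \cref{branching_positive}, this forces $d_B(X,X) = 0$.

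I expect the only delicate points — rather than genuine obstacles — to be bookkeeping: confirming that the identity is admissible as an order-preserving isomorphism onto the full vertex set, and fixing the convention that the maximum of an empty set of removal costs is $0$, so that a matching with no removed vertices incurs no removal penalty. Neither step requires real computation, but both should be stated explicitly so that the witness pair $(R^X,R^X)$ unambiguously achieves cost zero.
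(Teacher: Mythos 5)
Your argument is correct and follows essentially the same route as the paper: both exhibit identical branch decompositions with an order-preserving (identity-type) matching of every vertex, so all matching costs are zero and no removal costs are incurred. Your version is slightly more explicit about validity of the matching and the empty-maximum convention, but there is no substantive difference.
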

\begin{proof}
 To show that $d_B(X,X) = 0$, we will show that if $X = Y$ then $d_B(X,Y) = 0$.
Suppose $X = Y$, then there is an isomorphism $\gamma: V(X) \to V(Y)$ which preserves function values. 
We select branches $(u,v) \in X$ and $(\gamma (u), \gamma (v)) \in Y$ for our branch decompositions such that all relations are preserved and corresponding vertices have equivalent function values. 
This means $\max{ mc(u,\gamma(u))} = 0$  and we do not need to make any vertex deletions, so  $\max{rc(u)} \leq 0$. 
Hence, $d_B(X,Y) = 0$.
\end{proof}

\begin{lemma}
\label{dist_lemma:1}
If $d_B(X,Y) = 0$ then $X = Y$
\end{lemma}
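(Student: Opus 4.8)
The plan is to exploit that $d_B(X,Y)$ is a minimum over the finitely many pairs of rooted tree representations. Since a merge tree is a finite graph, it has finitely many branch decompositions and hence finitely many rooted tree representations, so the outer minimum in \cref{disteq} is attained: $d_B(X,Y)=0$ produces $R^X \in S_X$ and $R^Y \in S_Y$ with $\varepsilon_{min}(R^X,R^Y)=0$. Unpacking the definition of $\varepsilon_{min}$, there is then a valid matching $\gamma:M^X \to M^Y$ for which $\max_{u\in M^X}\mc(u,\gamma(u))=0$ and $\max_{u\in E^X\cup E^Y}\rc(u)=0$. My goal is to upgrade this zero-cost matching into a function-preserving isomorphism of the original merge trees.

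First I would rule out removals. For a non-trivial merge tree every branch $(m_i,s_i)$ is non-degenerate: $s_i$ is a saddle or root joined to its paired minimum $m_i$ by a descending path of at least one edge, and since edges of the merge tree strictly decrease in function value we get $f(s_i) > f(m_i)$, so $\rc((m_i,s_i)) = |m_i - s_i|/2 > 0$. Hence $\max_{u\in E^X\cup E^Y}\rc(u)=0$ forces $E^X = E^Y = \emptyset$, so $\gamma$ matches every vertex of $R^X$ to every vertex of $R^Y$ and is an order-preserving isomorphism of the full rooted tree representations. Moreover $\mc(u,\gamma(u)) = \max(|m_u - m_{\gamma(u)}|,|s_u - s_{\gamma(u)}|)=0$ says each matched pair of branches has identical minimum and saddle function values. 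The trivial case is handled separately: if one of $X,Y$ is a single vertex, then matching its degenerate branch to a non-degenerate branch, or removing any non-degenerate branch, incurs positive cost, so a zero-cost matching forces the other tree to be a single vertex of the same function value.

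The heart of the argument is a reconstruction step: a labeled rooted tree representation — the tree $R$ together with the two function values recorded at the minimum and saddle of each branch — determines its merge tree up to function-preserving isomorphism. I would make this precise by rebuilding the merge tree from $R$. Each branch is a monotone path from its saddle height $f(s_i)$ down to its leaf height $f(m_i)$, and the edge relation of $R$ records exactly where child branches attach: if $(m',s')$ is a child of $(m,s)$ in $R$, then $s'$ lies on the path of $(m,s)$ at height $f(s')$. Gluing the branch-paths according to these attachment heights recovers the vertices (leaves, saddles, root), their function values, and the adjacencies of the merge tree, and the construction is natural, so an order-preserving, label-preserving isomorphism $R^X \to R^Y$ transports one reconstruction onto the other. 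Applying this to the $\gamma$ produced above yields a function-value-preserving isomorphism $X \to Y$, i.e.\ $X=Y$.

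The main obstacle I anticipate is this reconstruction step, specifically verifying that the attachment data in $R$ pins down the merge tree uniquely. A single merge tree admits many branch decompositions (cf.\ \cref{fig:branch_decompositions}), so one must check that the map from a rooted tree representation back to a merge tree is well defined and that the matched saddle heights, together with the child relation of $R$, unambiguously determine the order in which branches merge. I would treat potential ties in function value (the non--general-position setting) with care, since coincident saddle heights are precisely where the attachment order could become ambiguous; either restricting to general position or arguing that equal labels plus the tree structure of $R$ still fix the gluing closes this gap.
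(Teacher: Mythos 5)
Your proposal follows essentially the same route as the paper's proof: zero matching cost forces all matched branches to share function values, zero removal cost forces the matching to be a bijection on all branches (since every non-degenerate branch has strictly positive removal cost), and the resulting label-preserving, order-preserving isomorphism of rooted tree representations yields $X = Y$. The only difference is that you explicitly isolate and sketch the reconstruction step --- that a function-labeled rooted tree representation determines its merge tree up to function-preserving isomorphism --- which the paper simply asserts in its closing sentence, so your version is, if anything, the more complete one.
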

\begin{proof}
Suppose $d_B(X,Y) = 0$. 
Then there exists a pair of rooted branchings $R^X \in B_{X}$ and $R^Y \in B_{Y}$, such that
\begin{equation*}
\max_{u \in M^X}\mc(u,\gamma(u)) \leq 0
\qquad \text{and}\qquad 
\max_{u \in E^X \cup E^Y}\rc(u) \leq 0.
\end{equation*}
If $\max_{u \in M^X}mc(u,\gamma(u)) = 0$ then, for every pair of matching vertices $u$ and $v$, 
$$\max{(|m_u -m_v|,|s_u-s_v|)} = 0.$$ 
This means that all corresponding vertices in the merge trees must have equivalent function values.

The requirement that $\max{\rc(u)} = 0$ is only satisfied when either all removals cost $0$, or when we do not need to remove any vertices. 
A removal cost of $0$ would mean that a minimum has the same function value as its saddle for some non-root branch, which would not be included in a merge tree, hence the rooted tree representations must contain the same number of vertices.
    
We therefore have a bijective mapping between the vertices of $X$ and $Y$ that preserves function values and the relations between nodes. 
We consider isomorphic graphs with identical function values to be equivalent, hence $X = Y$.
 \end{proof}
 
 \begin{figure}
    \centering
    \includegraphics[height = 2in]{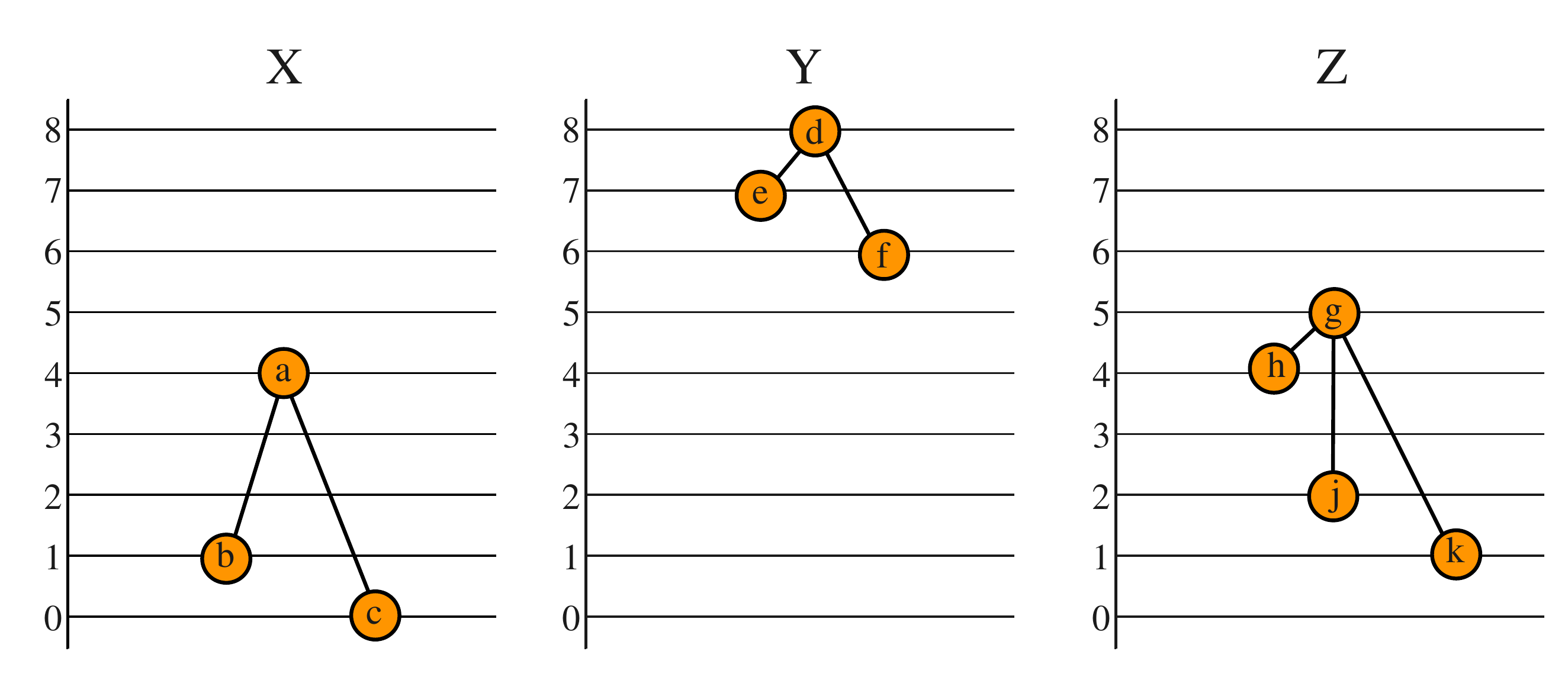}
    \caption{Merge Trees for the counterexample to the triangle inequality}
    \label{fig:merge_counter_example}
\end{figure}

We next give a counterexample to the triangle inequality.
 
\begin{theorem}
\label{thm:BranchingNotTriangle}
The branching distance does not satisfy the triangle inequality, and therefore is not a metric. 
\end{theorem}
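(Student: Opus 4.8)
The plan is to refute property (iv) by producing an explicit counterexample. Concretely, I would exhibit three tail-less merge trees $X$, $Y$, and $Z$, such as those drawn in \cref{fig:merge_counter_example}, and show that $d_B(X,Y) > d_B(X,Z) + d_B(Z,Y)$. A single such triple already defeats the triangle inequality, so the entire argument reduces to correctly evaluating the three pairwise branching distances and comparing their values.

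To evaluate a pairwise distance I would exploit the fact that the three trees are deliberately small, so that the collections $S_X$, $S_Y$, $S_Z$ of rooted tree representations each contain only a few elements and can be listed in full. For a fixed pair $R^X \in S_X$ and $R^Y \in S_Y$, the value $\varepsilon_{min}(R^X,R^Y)$ is obtained by ranging over every valid order-preserving isomorphism $\gamma : M^X \to M^Y$, recording which branches are matched and which lie in $E^X \cup E^Y$, evaluating the matching costs $\mc(u,\gamma(u))$ and removal costs $\rc(u)$ from their definitions, and taking the larger of the two worst-case values. Minimizing over the finitely many matchings yields $\varepsilon_{min}(R^X,R^Y)$, and minimizing that over the finite set of representation pairs yields $d_B$ via \cref{disteq}. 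Since the trees are small, each minimization is a short, finite case check.

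The trees will be chosen so as to expose the defect of the tail-less construction. Because the merge tree is truncated at its last merge, its root is an ordinary finite vertex whose branch must, under any valid matching, be matched to the other tree's root branch: a valid matching keeps each tree's root branch, and an order-preserving isomorphism sends root to root. This rigidity, absent in the untruncated merge tree where the root would recede into an infinite tail, is the lever I would pull. I would design $X$ and $Y$ so that aligning their branch structures directly is forced to be expensive, either because some branch must be matched across a large coordinate gap (large $\mc$) or because it must be removed at large cost $\rc(u) = |m_u - s_u|/2$ (large because the offending branch is long), while the intermediate tree $Z$ admits cheap matchings to each of $X$ and $Y$ separately. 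The order-preserving, tree-structured matching of \cref{defn:branchingDistance} then fails to compose, giving $d_B(X,Z) + d_B(Z,Y) < d_B(X,Y)$.

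The main obstacle will be the lower bound rather than the upper bounds. Establishing that $d_B(X,Z)$ and $d_B(Z,Y)$ are small only requires displaying one good matching for each, which bounds them from above. To obtain the strict inequality, however, I must certify that $d_B(X,Y)$ is genuinely large, and this is a minimization over all of $S_X \times S_Y$ and all valid matchings, so no single witness suffices. I would instead argue that in every valid matching at least one branch is forced either to be matched across a large coordinate gap or to be removed at large cost, bounding $\varepsilon_{min}(R^X,R^Y)$ from below uniformly over all representation pairs. Carrying out this exhaustive lower-bound case analysis, and choosing the vertex heights so that the resulting inequality is strict, is the crux of the proof.
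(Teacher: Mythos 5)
Your proposal matches the paper's proof: the paper exhibits exactly the three small merge trees of \cref{fig:merge_counter_example}, evaluates $d_B(X,Y)=5$, $d_B(Y,Z)=3$, and $d_B(X,Z)=1$ by the same finite case analysis over root-branch choices, matchings, and removals that you describe, and concludes $5 > 3+1$. You also correctly identify the crux --- certifying the lower bound on $d_B(X,Y)$ over all representation pairs --- which the paper handles by arguing that every valid matching must pair the two root branches across a large function-value gap.
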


\begin{proof}

To prove this, we provide a counterexample to the triangle inequality; that is, a collection of three merge trees for which $d_B(X,Y) \geq d_B(X,Z) + d_B(Z,Y)$. 

Consider the merge trees in  \cref{fig:merge_counter_example}.
When comparing $X$ and $Y$ we observe that the removal cost of any of the nodes will be less than the matching cost, so to determine the optimal edit from $X$ to $Y$ we need to select the root branches with the lowest comparison cost. 
We see that leaves $b$ and $f$ are the closest in function value, so our optimal edit will select $(b,a)$ as our root branch for $X$, $(f,d)$ as our root branch for $Y$, and delete $(c,a)$ and $(e,d)$. Our most expensive edit results from comparing the root branches and gives a minimum $\varepsilon$ of $5$, hence $d_B(X,Y) = 5$.

When comparing $Y$ and $Z$, we observe that the removal cost of the nodes will also be less than any matching cost, so to determine the optimal edit from $Y$ into $Z$ we need to select the root branches with the lowest comparison cost. 
We see that leaves $f$ and $h$ are closest in function value, so our optimal edit will select $(f,d)$ as our root branch for $Y$, $(h,g)$ as our root branch for $Z$, and delete $(e,d)$, $(j,g)$, and $(k,g)$. 
Our most expensive edit results from comparing the root branches and gives a minimum $\varepsilon$ of $3$, hence $d_B(Y,Z) = 3$.

When comparing $X$ and $Z$, we observe that removing $(g,h)$ will always be cheaper than comparing it, so we have to select a different root branch for $Z$. 
We also observe that the removal cost of any remaining vertex will be more expensive than comparing it, so we can form an optimal edit by removing $(h,g)$ and comparing the remaining vertices with the closest function values. 
We select either root branch for $X$ and compare $(b,a)$ to $(j,g)$ and $(c,a)$ to $(k,g)$ to obtain an optimal edit with a minimum $\varepsilon$ of $1$, hence $d_B(X,Z) = 1$.

Putting this together, we have that $d_B(X,Y) = 5$, $d_B(Y,Z) = 3$, $d_B(X,Z) = 1$, and hence 
\[ 
d_B(X,Y) = 5\geq 3 + 1 =  d_B(X,Z) + d_B(Z,Y)
\]
contradicting the triangle inequality.
As a consequence of this, the branching distance is not a metric.
\end{proof}

\subsection{Average Branching Distance}
\label{ABD}

Branching distance (Defn.~\ref{defn:branchingDistance}) is defined for comparing a pair of  tail-less merge trees, but our goal is to compare graphs with a function to  $\mathbb{R}^2$. 
We could take the merge tree of the graph for a single direction and use the branching distance, but this would result in a large amount of spatial information being be neglected, and there is no obvious \say{best} direction. 
Our solution is to compute the branching distance at some satisfactorily large amount of equally spaced angles and report the median of these measurements.

Since our rotation is about the origin, the range of function values is distorted in a way that disrupts branching distance's ability to compare merge trees. 
However, by shifting the function values when we calculate the merge tree, we are able to ensure that they stay centered about zero.
In order to use branching distance, we first construct a merge tree corresponding to a subset of directions $\omega \in [0,\pi)$ as described in Sec.~\ref{ssec:DirectionTransformMergeTree}.
Note that as described in that section, we shift the merge tree function values to have an median value of 0 in order to compare the trees based on overall structure even when the functions are vastly different. 

In our implementation, we begin with starting orientation $\omega = \pi/2$ and compute the merge tree as given by Defn.~\ref{defn:MergeTree}.
Then, to compare two graphs, we rotate both graphs simultaneously, and compute the merge tree for each in the given direction. 
The result is the average branching distance as follows.

\begin{defn}[Average branching distance (ABD)]
\label{defn:ABD}
Fix directions 
$\Omega = \{\omega_1,\cdots,\omega_n\} \subseteq [0,2\pi)$
and let $\mathrm{avg}(X)$ be fixed as either the mean or median of the set $X \subset \R$. 
Let $M_{G,\omega_i}$ be the merge tree of $G$ rotated by angle $\omega_i$. 
The average branching distance between two graphs $G$ and $H$, $d_A(G,H)$,
is defined to be 
\begin{equation*}
    d_A(G,H) = \mathrm{avg}(\{ d_B( M_{G,\omega_i},M_{H,\omega_i} \mid i = 1,\cdots, n\} )
\end{equation*}
That is, $d_A(G,H)$ is either the median or the mean of the set of  branching distances of their merge trees computed over a specified set of angles.
\end{defn}

\subsection{Metric Properties of Average Branching Distance}

In this section, we discuss properties of the average branching distance. 
In particular, we show that it is (i) symmetric, and (ii) nonnegative as given in Defn.~\ref{defn:metric}. 
We provide a counterexample to (iii) positiveness, as well as to (iv) the triangle inequality, which shows that it is not a metric, semi-metric, or pseudometric. 
For the rest of the section, let $G$, $H$, and $J$ be arbitrary embedded graphs. 

\begin{lemma}
$d_A(G,H) = d_A(H,G)$
\end{lemma}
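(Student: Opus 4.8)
The plan is to reduce the symmetry of $d_A$ directly to the already-established symmetry of the branching distance, \cref{banching_symmetric}. The key observation is that the average branching distance is, by \cref{defn:ABD}, nothing more than an average (mean or median) of a finite collection of branching distances, one for each fixed angle $\omega_i \in \Omega$. Since averaging (whether by mean or median) is an operation applied to a \emph{set} of real numbers, it is insensitive to any relabeling of its inputs; so it suffices to show that the two sets being averaged — the one for $d_A(G,H)$ and the one for $d_A(H,G)$ — are in fact identical.

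First I would fix the direction set $\Omega = \{\omega_1,\dots,\omega_n\}$ and note that, since both graphs are rotated by the \emph{same} angle $\omega_i$ before their merge trees are computed, the merge trees $M_{G,\omega_i}$ and $M_{H,\omega_i}$ appearing in $d_A(G,H)$ are exactly the same trees that appear in $d_A(H,G)$, merely with the two arguments of $d_B$ swapped. Then I would invoke \cref{banching_symmetric} termwise: for each index $i$,
\begin{equation*}
    d_B(M_{G,\omega_i}, M_{H,\omega_i}) = d_B(M_{H,\omega_i}, M_{G,\omega_i}).
\end{equation*}

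Consequently the two multisets
\begin{equation*}
    \{ d_B(M_{G,\omega_i}, M_{H,\omega_i}) \mid i = 1,\dots,n \}
    \quad\text{and}\quad
    \{ d_B(M_{H,\omega_i}, M_{G,\omega_i}) \mid i = 1,\dots,n \}
\end{equation*}
coincide element by element, hence are equal as sets. Applying $\mathrm{avg}$ to two equal sets yields equal values, so $d_A(G,H) = d_A(H,G)$, as desired.

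Honestly, I expect no serious obstacle here: the statement is a routine consequence of the termwise symmetry of $d_B$ combined with the fact that $\mathrm{avg}$ depends only on the underlying set of values. The only point requiring a moment's care is confirming that the \emph{same} angle is applied to both graphs in each comparison (so that the collection of trees genuinely matches up under the swap), and that $\mathrm{avg}$ is well-defined as a function of the set alone for both the mean and the median options allowed by \cref{defn:ABD}; both of these are immediate from the definitions.
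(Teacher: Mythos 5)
Your proposal is correct and follows essentially the same route as the paper: fix the common angle set, apply the symmetry of $d_B$ (Cor.~\ref{banching_symmetric}) termwise to see that the two collections of branching distances coincide, and conclude that their average (mean or median) is the same. Your version is slightly more careful in noting that $\mathrm{avg}$ depends only on the multiset of values, but the argument is the same.
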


\begin{proof}
We use the same set of angles to measure the distance and at each angle we are comparing the same pair of merge trees. This means we are always comparing the same pairs of merge trees for $d_A(G,H)$ and $d_A(H,G)$.
By Cor.~\ref{banching_symmetric} we know $d_B(M_G,M_H) = d_B(M_H,M_G)$ for any merge trees $M_G$ and $M_H$. 
This means the ordered set of distances for $d_A(G,H)$ is equivalent to the ordered set of distances for $d_A(H,G)$, so it follows that the medians of the sets will be equivalent.
\end{proof}

\begin{lemma}
$d_A(G,H) \geq 0$
\end{lemma}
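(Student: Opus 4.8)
The plan is to reduce the nonnegativity of $d_A(G,H)$ directly to the already-established nonnegativity of the branching distance $d_B$, since the average branching distance is by definition an aggregate (mean or median) of branching distances. First I would recall \cref{defn:ABD}: for a fixed set of directions $\Omega = \{\omega_1,\dots,\omega_n\}$, the quantity $d_A(G,H)$ equals $\mathrm{avg}$ of the finite set of values $\{d_B(M_{G,\omega_i},M_{H,\omega_i}) \mid i=1,\dots,n\}$, where $\mathrm{avg}$ is either the mean or the median. The key observation is that each element of this set is itself a branching distance between two tail-less merge trees, so \cref{branching_positive} applies to every term.

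The core step is then to invoke \cref{branching_positive}, which gives $d_B(M_{G,\omega_i},M_{H,\omega_i}) \geq 0$ for every $i = 1,\dots,n$. Having shown that every element of the set is nonnegative, I would finish by arguing that the aggregating function preserves nonnegativity. For the mean case, a sum of nonnegative reals divided by the positive count $n$ is nonnegative. For the median case, the median of a finite multiset of nonnegative values is one of those values (or the mean of two of them), hence also nonnegative. In either case $d_A(G,H) \geq 0$, which is exactly what we want.

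I do not expect any real obstacle here; this is a routine monotonicity argument, and the only thing to be slightly careful about is handling both definitions of $\mathrm{avg}$ covered by \cref{defn:ABD} rather than just the median. One could phrase it in a unified way by noting that both the mean and the median of a finite set of reals always lie between the minimum and the maximum of that set, so if the minimum is at least $0$ then so is the aggregate. This single remark dispatches both cases at once and makes the proof essentially a one-line corollary of \cref{branching_positive}.
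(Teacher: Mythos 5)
Your proposal is correct and follows essentially the same route as the paper: both invoke \cref{branching_positive} to show every branching distance in the set is nonnegative and conclude that the aggregate is therefore nonnegative. Your extra remark covering the mean case as well as the median is a harmless refinement, not a different argument.
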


\begin{proof}
By Lem.~\ref{branching_positive} we know that $d_B(M_G,M_H) \geq 0$ 
for any merge trees $M_G$ and $M_H$. 
This means every element of the ordered set of distances for two graphs is non-negative, hence the median of the set must be non-negative. 
\end{proof}

\begin{lemma}
$d_A(G,G) = 0$
\end{lemma}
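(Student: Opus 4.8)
The plan is to reduce this statement about the average branching distance $d_A(G,G)$ directly to the already-established fact that the branching distance of a merge tree with itself is zero. The key observation is that $d_A(G,G)$ is built, by Defn.~\ref{defn:ABD}, as an average (median or mean) over a fixed, finite set of angles $\Omega = \{\omega_1,\dots,\omega_n\}$ of the branching distances $d_B(M_{G,\omega_i}, M_{G,\omega_i})$. So the entire argument hinges on controlling each term of this set before taking the average.

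First I would fix an arbitrary angle $\omega_i \in \Omega$ and observe that, since we are comparing the graph $G$ to itself at the same rotation, the two merge trees being compared are literally identical: $M_{G,\omega_i} = M_{G,\omega_i}$. This is true termwise for every $i$, because rotating $G$ by $\omega_i$ and then computing the height-function merge tree (including the median-shift normalization described in Sec.~\ref{ssec:DirectionTransformMergeTree}) produces the same tree whether we think of it as belonging to the ``first'' or ``second'' copy of $G$. Next I would invoke Lem.~\ref{branching_samezero}, which states precisely that $d_B(X,X) = 0$ for any merge tree $X$. Applying this with $X = M_{G,\omega_i}$ gives $d_B(M_{G,\omega_i}, M_{G,\omega_i}) = 0$ for each $i = 1,\dots,n$.

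Then the finishing step is elementary: the set of branching distances appearing in the definition of $d_A(G,G)$ is the set $\{0, 0, \dots, 0\}$ consisting of $n$ copies of $0$. Since both the mean and the median of a set all of whose elements equal $0$ are themselves $0$, we conclude $d_A(G,G) = \mathrm{avg}(\{0,\dots,0\}) = 0$ regardless of which choice of $\mathrm{avg}$ is used. This handles both cases of Defn.~\ref{defn:ABD} uniformly.

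I do not expect any genuine obstacle here; the lemma is essentially a corollary of Lem.~\ref{branching_samezero} together with the trivial fact that an average of zeros is zero. The only point requiring a sentence of care is making explicit that the two merge trees compared at each angle are identical (so that Lem.~\ref{branching_samezero} applies termwise), and that the claim holds for either interpretation of $\mathrm{avg}$; neither of these is technically difficult. Indeed, this is the ``easy'' positive result, in contrast to the harder counterexamples to positiveness (iii) and the triangle inequality (iv) that the section promises to construct afterward, where distinct graphs can share all of their direction-wise merge trees.
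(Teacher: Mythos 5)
Your proposal is correct and follows essentially the same route as the paper: both note that at each angle the two merge trees being compared are identical, invoke Lem.~\ref{branching_samezero} termwise, and conclude the average of a set of zeros is zero. Your only (harmless) addition is explicitly covering the mean case alongside the median.
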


\begin{proof}
We are comparing the same orientations of identical graphs at each angle, so the merge trees being compared will be identical for any orientation. 
By Lem.~\ref{branching_samezero}, $d_B(M_G,M_G) = 0$ for any merge tree $M_G$, hence every element of any ordered set of distances for $d_A(G,G)$ is $0$ and the median of the set will be $0$.
\end{proof}

\begin{figure}
    \centering
    \includegraphics[scale = 0.3]{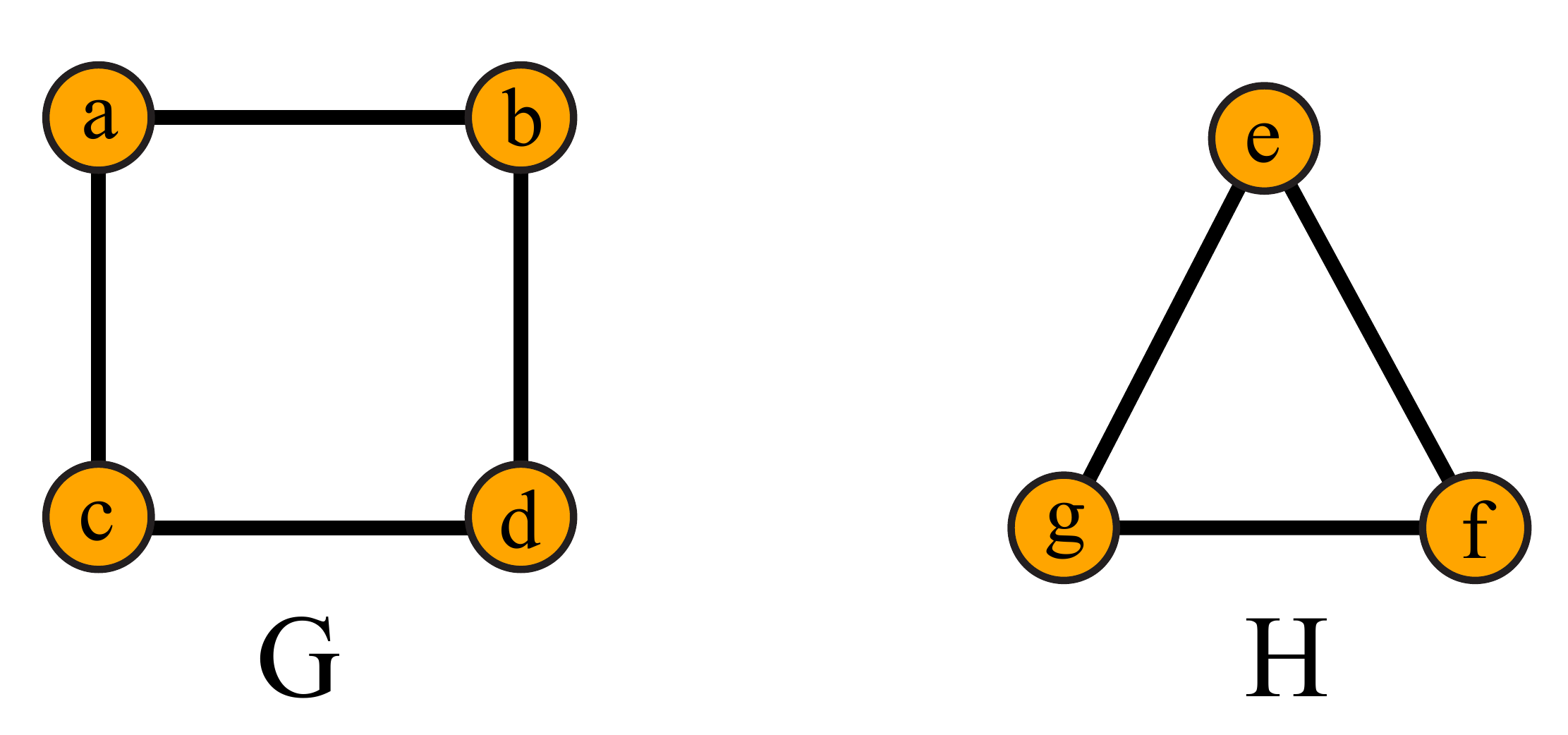}
    \caption{Graphs for the counterexample to $d_A(X,Y) = 0$ implies $X = Y$.}
    \label{fig:centered_counter_example}
\end{figure}

We will now provide a counter-example to (iii) positiveness from Defn.~\ref{defn:metric}. 
\begin{theorem}
\label{thm:ABD_Fails_iii}
There is a pair of non-isomorphic graphs $G$ and $H$ for which $d_A(G,H) = 0$.
\end{theorem}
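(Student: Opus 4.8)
The plan is to exhibit the two graphs $G$ and $H$ of \cref{fig:centered_counter_example} and verify that, although they are not isomorphic, their rotated merge trees agree at every sampled direction. Concretely, I would show that for each $\omega \in \Omega$ the tail-less merge trees $M_{G,\omega}$ and $M_{H,\omega}$ are isomorphic with identical (normalized) function values. Once this is established, the argument of \cref{branching_samezero} gives $d_B(M_{G,\omega},M_{H,\omega}) = 0$ for every $\omega$, so the set $\{ d_B(M_{G,\omega_i},M_{H,\omega_i}) \}$ is identically zero; its mean and its median are therefore both $0$, and hence $d_A(G,H)=0$ by \cref{defn:ABD}. Since $G \not\cong H$, this is exactly the desired failure of positiveness. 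Note that by \cref{dist_lemma:1} the vanishing of each $d_B$ already forces the per-direction merge trees to be genuinely equal, so the entire burden of the proof is to produce non-isomorphic graphs whose merge trees coincide in every direction at once.

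The reason such graphs exist is that the (tail-less) merge tree is a lossy summary of the graph: it records only the connectivity of the sublevel sets and discards both cycles and any structure above the final merge. I would exploit this by choosing $G$ and $H$ to differ only in features invisible to the merge tree in every direction. The cleanest realization is to take both graphs to be \emph{sublevel-connected in all directions}, i.e.\ graphs (such as a cycle, or a cycle together with interior spokes) for which the sublevel set $f_\omega\inv(-\infty,a]$ is connected whenever it is nonempty. For such a graph the merge tree is trivial in every direction, consisting of a single vertex whose value is the minimum of $f_\omega$. Choosing $G$ and $H$ non-isomorphic but point-symmetric about the origin and sharing the same convex hull then makes the normalized minima agree direction by direction, so the two trivial merge trees carry the same single value for every $\omega$.

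I expect the main obstacle to be controlling all directions simultaneously, rather than any single direction: a modification that is invisible to the merge tree in one direction can create a new leaf (a new local minimum) in another, breaking the equality there. This is precisely what forces the geometric symmetry hypotheses above, and it is also why I would argue the coincidence for a generic finite angle set $\Omega$, isolating the measure-zero directions in which two vertices project to equal height and checking that ties are broken consistently for both graphs. A secondary technical point is the normalization: because $G$ and $H$ have different vertex sets, I must verify that the median shift applied when forming $f_\omega$ (Sec.~\ref{ssec:DirectionTransformMergeTree}) is the same for both graphs in each direction, which the point-symmetry of the construction guarantees, since a vertex set invariant under $v \mapsto -v$ has median projection $0$ in every direction.
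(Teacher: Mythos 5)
Your proposal is correct and takes essentially the same route as the paper: the paper's proof exhibits the two non-isomorphic graphs of \cref{fig:centered_counter_example}, observes that their tail-less merge trees are trivial in every direction, and concludes that each per-direction branching distance is $0$, hence so is the median. The only real difference is that your point-symmetry and shared-convex-hull hypotheses are not needed, because the normalization is applied to the merge tree's function values (shifted to have median $0$), which sends any trivial merge tree to the single value $0$ regardless of the underlying geometry.
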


\begin{proof}

Consider the non-isomorphic graphs $G$ and $H$ in Figure \ref{fig:centered_counter_example}.
We see that at any angle of comparison we are comparing two trivial merge trees which will always be shifted to function value $0$ after their construction. The branching distance between any such pair of merge trees will be $0$ hence, every element of the ordered set of distances for $d_A(G,H)$ will be zero and the median of all these distances is $0$. 
However, $G$ is not equivalent to $H$ as the graphs are non-isomorphic. 
As a consequence of this, average branching distance is neither a metric nor a semi-metric. 
\end{proof}

We will now provide a counter example to the triangle inequality, $d_A(G,H) \leq d_A(G,J) + d_A(J,H)$. 

\begin{theorem}
\label{thm:ABD_fails_triangle}
There is a collection of graphs $G$, $H$, and $J$ for which $d_A(G,H) > d_A(G,J) + d_A(J,H)$. 
\end{theorem}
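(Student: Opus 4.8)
The plan is to reduce this to the failure of the triangle inequality for the branching distance already established in \cref{thm:BranchingNotTriangle}. There we exhibited merge trees $X$, $Y$, $Z$ with $d_B(X,Y)=5$, $d_B(Y,Z)=3$, and $d_B(X,Z)=1$, so that $d_B(X,Y) > d_B(X,Z)+d_B(Z,Y)$. Since $d_A$ is, by \cref{defn:ABD}, an average (median or mean) of branching distances taken over a fixed set of directions $\Omega$, it suffices to construct three embedded graphs $G$, $H$, $J$ whose centered directional merge trees, at \emph{every} sampled angle in $\Omega$, reproduce a violating triple of this kind. Concretely, if the three sets
\[
\{\, d_B(M_{G,\omega_i},M_{H,\omega_i}) \,\}_i, \quad
\{\, d_B(M_{G,\omega_i},M_{J,\omega_i}) \,\}_i, \quad
\{\, d_B(M_{J,\omega_i},M_{H,\omega_i}) \,\}_i
\]
are each \emph{constant} across $\Omega$ and equal to $5$, $1$, and $3$ respectively, then their medians (equivalently, their means) are exactly those constants, and the violation $5 > 1+3$ is inherited verbatim by $d_A$.

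The first step is to realize a violating triple as genuine directional merge trees: an abstract merge tree can be drawn in the plane with each node placed at height equal to its function value and each minimum hung below its associated saddle, so that its own vertical height function recovers the tree. Drawing $X$, $Y$, $Z$ already centered at median value $0$ makes them equal to their own \emph{centered} directional merge trees, which matters because the normalization of \cref{defn:ABD} shifts each tree to median vertex value $0$ before comparison. The second step is the mechanism forcing the distances to be constant across $\Omega$. The quickest version takes $\Omega = \{\pi/2\}$ to be a single direction, in which case $d_A$ reduces to $d_B$ of the constructed trees and the counterexample is immediate; to present a construction faithful to the averaging spirit for an arbitrary or large $\Omega$, I would instead give each of $G$, $H$, $J$ an $n$-fold rotational symmetry and take $\Omega = \{2\pi k/n : k=0,\dots,n-1\}$ aligned with that symmetry. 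Rotation by $2\pi/n$ then carries each graph to itself, so $M_{G,\omega_i}$ is independent of $i$ (and likewise for $H$ and $J$), whence each pairwise distance set is constant, as required.

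Assembling these pieces yields $d_A(G,H)=5 > 4 = 1+3 = d_A(G,J)+d_A(J,H)$, contradicting the triangle inequality and (together with \cref{thm:ABD_Fails_iii}) showing $d_A$ is neither a metric, a semi-metric, nor a pseudo-metric; since the constant distance sets give the same value under either choice of $\mathrm{avg}$, the conclusion holds for both the mean and the median variants. The main obstacle I anticipate is the second step: ensuring that three genuinely realizable directional merge trees with pairwise branching distances $5$, $1$, $3$ can be produced \emph{simultaneously} as the centered merge trees of three graphs under a \emph{single} symmetric angle set, and verifying that the per-tree centering shifts do not corrupt the branching distances. The latter is controlled because every branching cost is a difference of function values and hence translation-invariant, but one must design the \emph{centered} trees directly rather than centering after the fact, since different per-tree shifts would otherwise alter the cross-tree comparison costs.
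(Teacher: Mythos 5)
Your proposal matches the paper's proof in essence: the paper likewise takes $\Omega=\{\pi/2\}$, chooses three embedded graphs that coincide with their own centered merge trees at that orientation, and lets the triangle-inequality failure of $d_B$ pass directly to $d_A$ (the paper builds a fresh triple with distances $6.5$, $2.5$, $3$ rather than reusing $X$, $Y$, $Z$, but that is a cosmetic difference). Your caveat about designing the trees to be already centered is the right one to flag, and the paper handles it exactly as you suggest, by asserting the graphs equal their shifted merge trees at the standard orientation.
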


\begin{proof}
Let the set of angles be $\{\frac{\pi}{2}\}$, and let $G$,$H$, and $J$ be the graphs in Figure \ref{fig:abd_counter}. 
The graphs' shifted merge trees $M_G,M_H$ and $M_J$ are identical to the original graphs at the standard orientation and we are only comparing at one angle, hence each pairwise ABD over $\{\frac{\pi}{2}\}$ will be the same as the respective branching distance.

\begin{figure}
    \centering
    \includegraphics[scale=0.4]{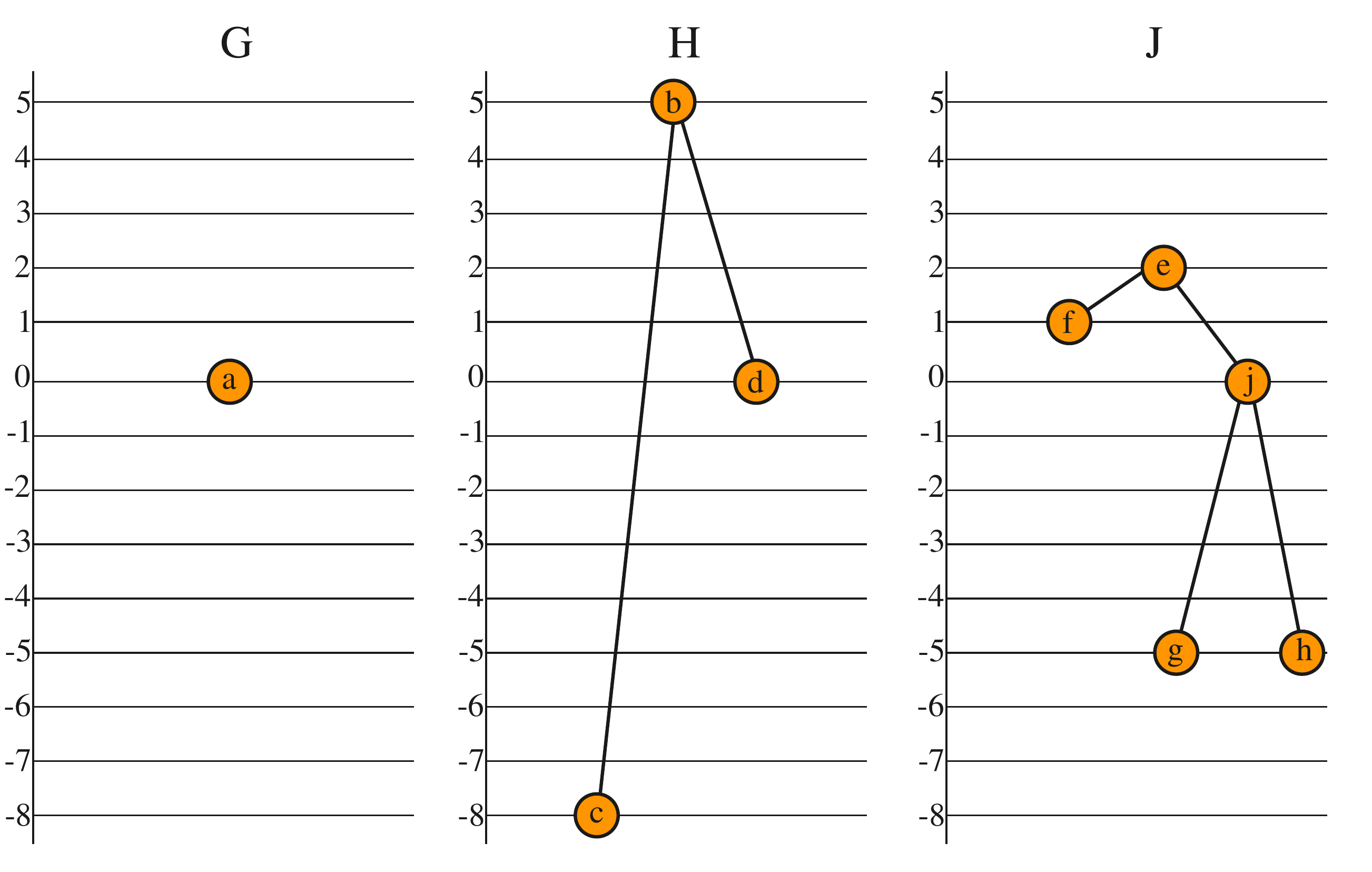}
    \caption{Graphs for the counterexample to the triangle inequality for ABD.}
    \label{fig:abd_counter}
\end{figure}
When comparing $M_G$ and $M_H$ we observe that $M_G$ is a trivial merge tree, hence the only rooted tree representation for $M_G$ is a single root branch $(a,a)$. Matching branch $(c,b)$ is more expensive than removing it, so we remove $(c,b)$ and match the single root branch of $M_G$ to $(d,b)$. The maximum cost for this optimal edit is $6.5$ from removing $(c,b)$, hence $d_A(G,H) = d_B(M_G,M_H) = 6.5$.

When comparing $M_G$ to $M_J$, we observe that matching any branch containing $g$ or $h$ would be more expensive than removing it, hence we must choose either $(f,e)$ or $(j,e)$ as our root branch and remove all other branches. For either selection of root branch, the most expensive cost is $2.5$ from removing $(g,j)$ and $(h,j)$, hence $d_A(G,J)= d_B(M_G,M_J)=2.5$.

When comparing $M_H$ to $M_J$, we observe that matching $(c,b)$ to a branch containing $g$ or $h$ will be cheaper than removing it. 
We also see that we need to select $(h,e)$ or $(g,e)$ as our root branch in order to match $(c,b)$ to a branch containing $g$ or $h$. 
Removing $(g,j)$ or $(h,j)$ will always be cheaper than matching it to $(b,d)$, so we remove the one we didn't select as our root branch. 
Matching or removing $(f,e)$ and $(d,b)$ does not effect our maximum cost, hence any optimal edit will have a maximum cost of $3$ from matching the root branches. Therefore, $d_A(H,J)=d_M(M_H,M_J)=3$.

We see that $d_A(G,H) = 6.5$, $d_A(G,J) = 2.5$, $d_A(H,J) = 3$, and $6.5 > 2.5 + 3$ contradicting $d_A(G,H) \leq d_A(G,J) + d_A(H,J)$. 

\end{proof}

Combining \cref{thm:ABD_Fails_iii} and \cref{thm:ABD_fails_triangle} gives us that ABD is neither a semi-metric nor psuedo-metric, thus it is only a distance.

\subsection{Average Branching Distance of Convex Polygons}
In the process of simplifying our data through merge tree construction, we lose information such as interior complexity and some shape data.  
We will show that average branching distance will consider any convex polygons to be similar, and give an example of data  not represented well by our function.

\begin{lemma}
The merge tree from any direction of a convex polygon graph is trivial.
\label{lemma:complex_trivial}
\end{lemma}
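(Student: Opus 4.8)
The plan is to exploit the defining geometric feature of convexity: in any direction the height function has a single local minimum on the polygon's boundary, so no two components of a sublevel set are ever born separately and later merge, which is precisely what would create nontrivial structure in the merge tree. First I would fix a direction $\omega$ and, following Sec.~\ref{ssec:DirectionTransformMergeTree}, work with the induced height function $f_\omega$ on the cycle graph $G$ forming the polygon boundary. The central claim is that $f_\omega$ restricted to this cycle is \emph{unimodal}: traversing the boundary once, the value strictly increases along one arc from the unique lowest vertex to the unique highest vertex and strictly decreases along the complementary arc. This follows from convexity, since any level set $\{f_\omega = a\}$ is a line perpendicular to $\omega$, and such a line meets the boundary of a convex region in at most two points; hence each level splits the boundary cycle into exactly one \say{below} arc and one \say{above} arc.

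Given unimodality, the key step is to show that the sublevel set $f_\omega\inv((-\infty,a])$ is connected for every $a$ at or above the minimum value $a_{\min}$. Indeed, the set of boundary points of height at most $a$ is precisely the single \say{below} arc determined by the level $a$, which is connected. Consequently $f_\omega\inv((-\infty,a])$ has exactly one connected component $C$ for all $a \geq a_{\min}$, and, since that arc has a single lowest point, the set of minima $\mu(C)$ is always the singleton consisting of the unique lowest vertex. Therefore $\Gamma((-\infty,a])$ is a single, unchanging one-element set for every $a \geq a_{\min}$.

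It then remains to trace this through \cref{defn:MergeTree}. The only value at which $\Delta(a) = \Gamma((-\infty,a]) \setminus \Gamma((-\infty,a))$ is nonempty is $a = a_{\min}$, where the lone component is born out of the empty sublevel set; for every $a > a_{\min}$ the two sets $\Gamma((-\infty,a])$ and $\Gamma((-\infty,a))$ coincide, so $\Delta(a) = \emptyset$. Hence $V(M)$ contains exactly one element and $M$ is trivial. The main obstacle I anticipate is not the generic argument but the degenerate directions in which an edge of the polygon is perpendicular to $\omega$, so that a level set meets the boundary in an entire segment and several vertices may share the lowest value. In these cases I would argue that the bottom edge, or the set of lowest vertices, still forms a single connected arc, so that $f_\omega\inv((-\infty,a])$ remains connected and $\mu(C)$ remains one identified component; thus the weak form of unimodality (a single connected minimum with monotone arcs on either side) is all that the merge-tree definition actually requires, and the conclusion is unchanged.
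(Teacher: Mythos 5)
Your proof is correct and takes essentially the same route as the paper: both arguments reduce to showing that convexity forces every nonempty sublevel set to be a single connected component with a single identified minimum, so $\Delta(a)$ is nonempty only at the global minimum value and the merge tree has one vertex. Your justification via unimodality of the height function on the boundary cycle (a line meets a convex boundary in at most two points) is a more careful rendering of the paper's informal ``no reflex vertex, so every vertex has a descending path to a bottom vertex'' argument, and your explicit treatment of the degenerate case where the bottom edge is perpendicular to the direction is a point the paper's proof glosses over.
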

\begin{proof}
For any orientation of a graph $G$, there will be some minimum function value $k$.
Any two vertices with function value $k$ must be connected by a horizontal path, otherwise we would need some higher function value reflex vertex connecting the two vertices, contradicting the fact that G is a convex polygon.
Every other vertex must be degree two with a descending path to some bottom vertex with function value $k$, hence the entire graph is one connected component regardless of sublevel set.
\end{proof}

\begin{corollary}
$d_A(G,H) = 0$ when $G$ and $H$ are graphs of convex polygons.
\end{corollary}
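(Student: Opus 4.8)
The plan is to derive this corollary directly from \cref{lemma:complex_trivial}. First I would apply that lemma to both inputs: for every direction $\omega_i$ in the fixed angle set $\Omega$, the merge trees $M_{G,\omega_i}$ and $M_{H,\omega_i}$ are both trivial, i.e.\ each consists of a single vertex. This is the only geometric input needed, and it collapses the entire computation of $d_A(G,H)$ in \cref{defn:ABD} to the problem of understanding the branching distance between two trivial merge trees at every angle.

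The next step is to verify that the branching distance between any two trivial merge trees is $0$, an observation already exploited in the proof of \cref{thm:ABD_Fails_iii}. Recall that, as noted after \cref{defn:MergeTree} and in the normalization of \cref{ssec:DirectionTransformMergeTree}, we shift function values so that the median vertex value is $0$; for a single-vertex tree this forces its one vertex to sit at function value $0$. A trivial merge tree admits a unique branch decomposition, the degenerate branch $(v,v)$, and hence a unique rooted tree representation whose only vertex $u$ has both elements equal to $0$. Matching the two single-vertex representations $u$ and $w$ therefore incurs matching cost $\mc(u,w) = \max(|0-0|,|0-0|) = 0$ and requires no removals, so $\varepsilon_{min} = 0$ and $d_B(M_{G,\omega_i},M_{H,\omega_i}) = 0$ for each $i$.

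Finally, since every element of the set $\{ d_B(M_{G,\omega_i},M_{H,\omega_i}) \mid i = 1,\dots,n \}$ equals $0$, both the mean and the median of that set are $0$, giving $d_A(G,H)=0$ as required. I do not expect a genuine obstacle here, as the corollary is essentially immediate once \cref{lemma:complex_trivial} is in hand; the only point meriting care is checking that the median-shift normalization really does place the single vertex of each trivial tree at the common value $0$, so that the two degenerate branches match at zero cost rather than merely sharing the same internal structure.
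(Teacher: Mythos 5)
Your proposal is correct and follows essentially the same route as the paper's proof: invoke the lemma that every directional merge tree of a convex polygon is trivial, note that the median-shift normalization places each single vertex at function value $0$ so the branching distance between the two trivial trees is $0$ at every angle, and conclude the average is $0$. You simply spell out the degenerate-branch matching in more detail than the paper does.
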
 

\begin{proof}
By Lemma \ref{lemma:complex_trivial} we know that at any angle the entire merge tree for either graph is a trivial merge tree, hence we will always be comparing two trivial merge trees which will always be shifted to function value $0$ after their construction.
The branching distance between any such pair of merge trees will be $0$, hence every element of the ordered set of distances for $d_A(G,H)$ is $0$, and the median 
of this set is $0$.
\end{proof}

This can be extended to graphs with convex polygon borders with no merges inside the polygon. 
For example, consider the graphs in Fig.~\ref{fig:convex_polygons} from the IAM database \cite{iam}:
From a centered merge tree perspective, these graphs are identical from any angle, but we can clearly see that there are significant differences between them due to their shape and interior data.
\begin{figure}
    \centering
    \includegraphics[scale = 0.4]{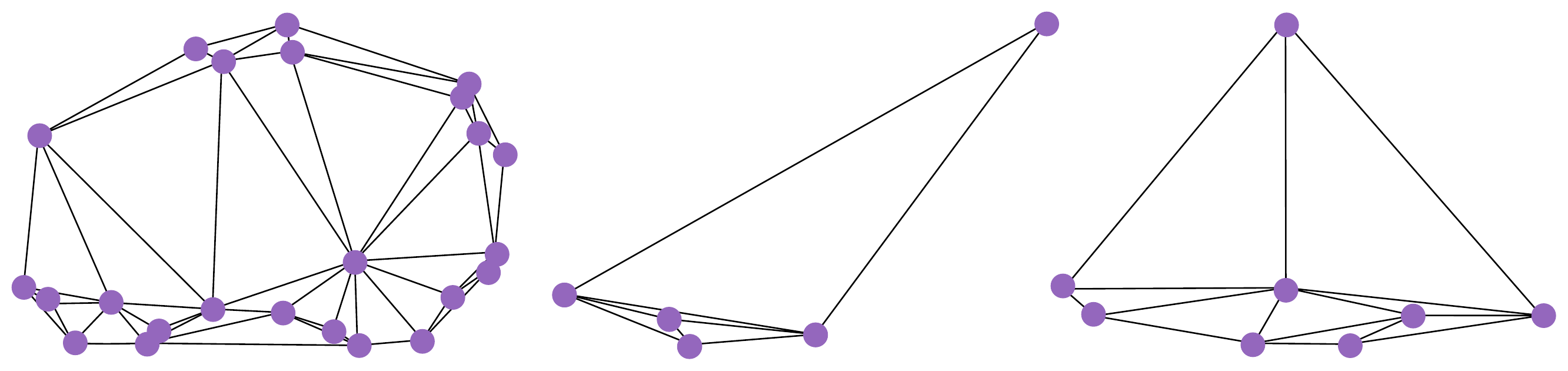}
    \caption{Graphs from the IAM database \cite{iam} which have convex polygon borders}
    \label{fig:convex_polygons}
\end{figure}

\section{Implementation, Experimentation, and Results}

To explore the capabilities of our proposed distance function, we developed Python code (version 3.7.6) to compute the average branching distance. 
We implemented the algorithm presented in Beketayev et al.~\cite{morozov} to compute the branching distance.
In order to use merge tree distances as a method for graph comparison, we also implemented an original method for merge tree construction, which we describe in \cref{sec:mergetreealg}.
For tasks that required interactions with graphs, we used  the NetworkX Python package~ \cite{networkx}; 
we also utilized several other scientific computing packages \cite{matplotlib,scikit-learn,numpy,scipy}. 
Our code  can be found in version 1.0 of our GitHub repository~\cite{us}.

\subsection{Merge Tree Algorithm}
\label{sec:mergetreealg}

The design of our merge tree construction algorithm was motivated by the union-find data structure, which is commonly used to track elements of a set which are divided into disjoint subsets.
Specifically, our algorithm keeps track of child and parent pointers to identify connected components at the graph's sublevel sets. 
Nodes in the original graph will be assigned child pointers, whereas nodes in the merge tree will be assigned parent pointers.
We note that we use the term pointer here in the computer science sense, where, for each vertex, we are essentially associating a unique other vertex in the graph based on some desired property.

Fix a subgraph $H \subseteq G$, and note that this graph potentially has more than one connected component. 
We define the \textit{representative} $v_r$ of a vertex $v$ to be the lowest function valued vertex connected to $v$ in the subgraph $H$.
Assuming a generic function, i.e.~all vertices have distinct function values, all nodes in the same component share the same, unique representative\footnote{Note that our code does not require genericity, but the assumption simplifies the discussion of the algorithm.}.
For this reason, we also call $v_r$ the representative of the component containing $v$.

To find this representative, for each vertex $v \in V(H)$, we associate a \textit{child pointer} which is in the same connected component as $v$ but not necessarily adjacent. 
The representative is the only node in a component which is its own child pointer.
The child pointers are set so that the representative of a vertex can be found by following child pointers until a node which is its own child pointer is found.

The \textit{root} $v_p$ of a vertex $v$ is the node with the largest function value to which it is connected.
Like the component representative, the root is the same for all vertices in a connected component. 
The \textit{lower neighbors} of $v$ are all vertices $u$ adjacent to $v$ such that $f(v) > f(u)$; \textit{upper neighbors} are defined symmetrically.
The \textit{parent pointer} of a vertex in a merge tree is itself if it is the root; or its only upper neighbor otherwise.
Similar to the representative, the root can be found by following the parent pointers of $v$ until a vertex with is its own parent pointer is found.

We will use a sweep line algorithm, where we process the graph vertices in order of function value $a \in \R$.
Setting $H_a = f\inv(-\infty,a] $ to be the sublevelset of the graph, we maintain the merge tree of $H_a$ at each step as we increase $a$.
As the merge tree is constructed, the child pointers of relevant nodes in $H_a$ will be updated; and parent pointers will be maintained in the merge tree being constructed.
The later constitute the edge set of the newly constructed merge tree.
This will be done to ensure that all parent-paths and child-paths in a connected component lead to the correct root and representative, respectively. 

We assume that our input graph has been pre-processed as follows. 
During this step, all adjacent nodes with the same function value will be collapsed into a single node, thus we can assume that our graph input has unique function value for adjacent vertices. 
This will not alter the merge tree and ensures that all edges have a specified upper and lower vertex.

\begin{breakablealgorithm}
  \caption{Create merge tree}
	\begin{algorithmic}[1]
	\Require{A graph $G=(V,E)$ with function $f:V \to \R$ such that $f(u) \neq f(v)$ for all $uv \in E$}
	    \State Initialize an empty graph $M = (V(M),E(M))$ with $V(M) = \emptyset$ and a function $f_M: V(M) \rightarrow \mathbb{R}$
	    \State Note that each vertex in $V(M)$ will be associated to a unique vertex in $V(G)$ which caused its addition.
	    \State Initialize all vertices in $G$ with an empty child pointer, and an empty parent pointer.
	    \State Sort $V$ ascending by function value
		\For {Vertex $v \in V$ with $f(v) = a$}
		    \State Let  $R \subset V$ be the representatives of the components of  $v$'s lower neighbors in $H_a = f^{-1}(-\infty,a] \subseteq G$, found by recursively following children pointers until a vertex is found which is its own child.
		    \State Let $R_M \subset V(M)$ be the vertices in $M$ corresponding to the elements of $R$ 
            \State
    		\If{$|R|=0$}
        		\State Add a new vertex $v_M$ to $V(M)$ 
            	\State Set $f_M(v_M)=f(v)$
        		\State Set $v_M$ as its own parent
        		\State Set $v$ as its own child
    		\EndIf
    		\State
    		\If{$|R|=1$}
    		    \State Set the one element in $R$ as $v$'s child
    		\EndIf
    		\State
    		\If{$|R|\geq2$}
                \State Choose the vertex $c \in R$ with minimum function value, i.e.~$ f(c) \leq f(u)$ $ \forall u \in R$
        		\State Make a copy $v_M$ of $v$ in the merge tree $M$
        		\State Set $f_M(v_M)=f(v)$
        		\State For all $x \in R_M$, find the root of the connected component in $M$, $x_p$ 
        		\State  Add the edge $x_p v_M$  to $E(M)$ for each $x_p$ and set the parent of $x_p$ to be $v_M$.
        		\State Collapse all on-level neighbors of $v_M$
        		\State Set $v_M$ as its own parent
        	    \State Set $c$ as the child of $v$ and all $x \in R$
            \EndIf
        \EndFor
        
        \State Return $M$ and $f_M$
	\end{algorithmic}
\end{breakablealgorithm}

Notice that by the definition of a merge tree, $\Delta(a)$ will only be non-empty when a minimum is encountered or components merge. 
Lines 9-14 guarantee that all minima are detected and included in the merge tree. 
Furthermore, lines 20-29 guarantee that all component merges are detected and included in the merge tree. 
Vertices in a merge tree are only connected when components merge. 
Line 25 accounts for all the connections between vertices. 
Trivially, $f_M$ is correctly defined. 
Therefore, the proposed algorithm correctly constructs a merge tree based on Definition \ref{M}.

\subsection{Experiments}\label{experimentation}

\begin{figure}
    \centering
    \begin{minipage}{.4\textwidth}
    \centering
    \includegraphics[align=c, width= .8\textwidth]{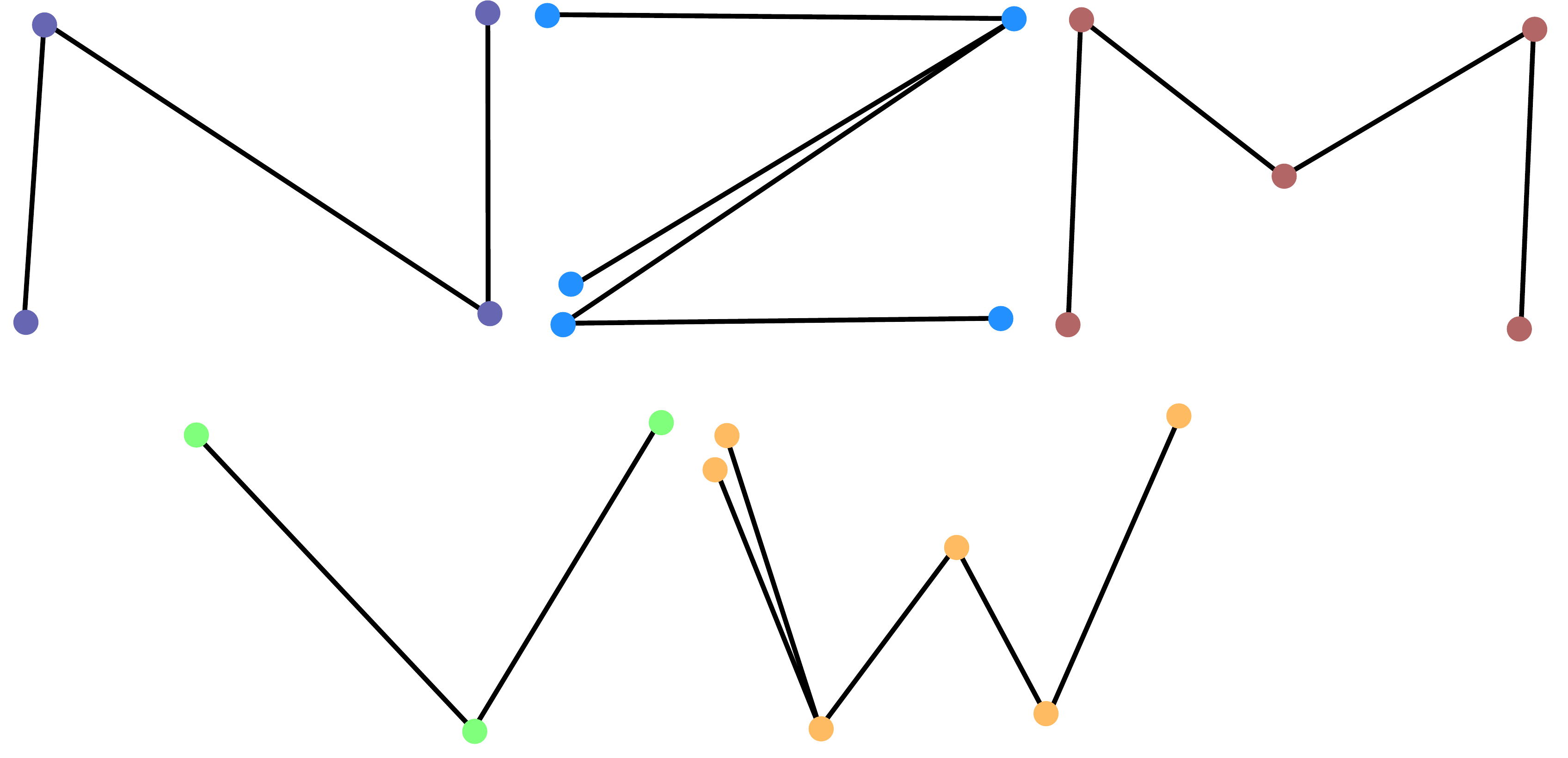} 
    \includegraphics[align=c, width= \textwidth]{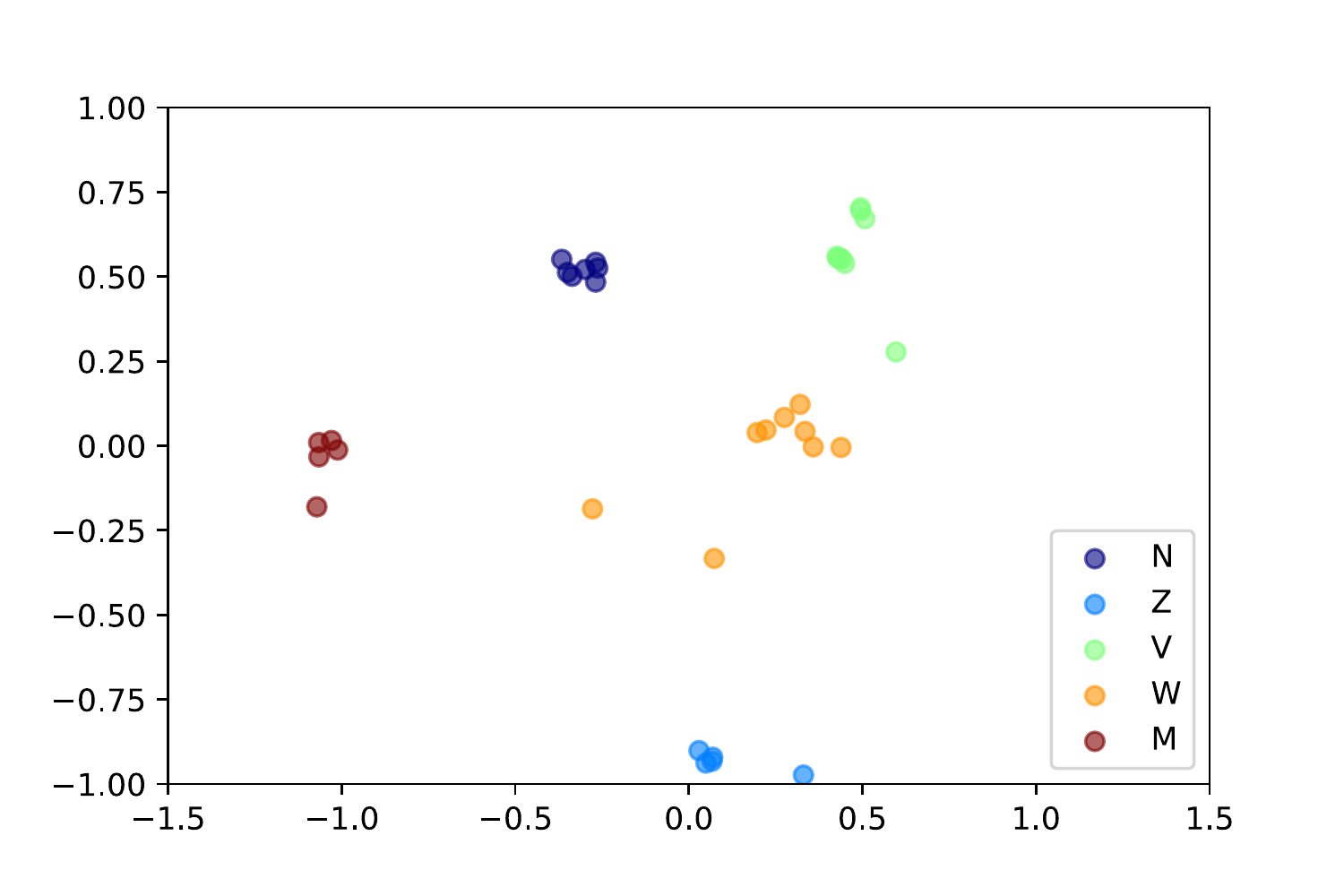}
    \end{minipage}
    \includegraphics[align=c, width = .59\textwidth]{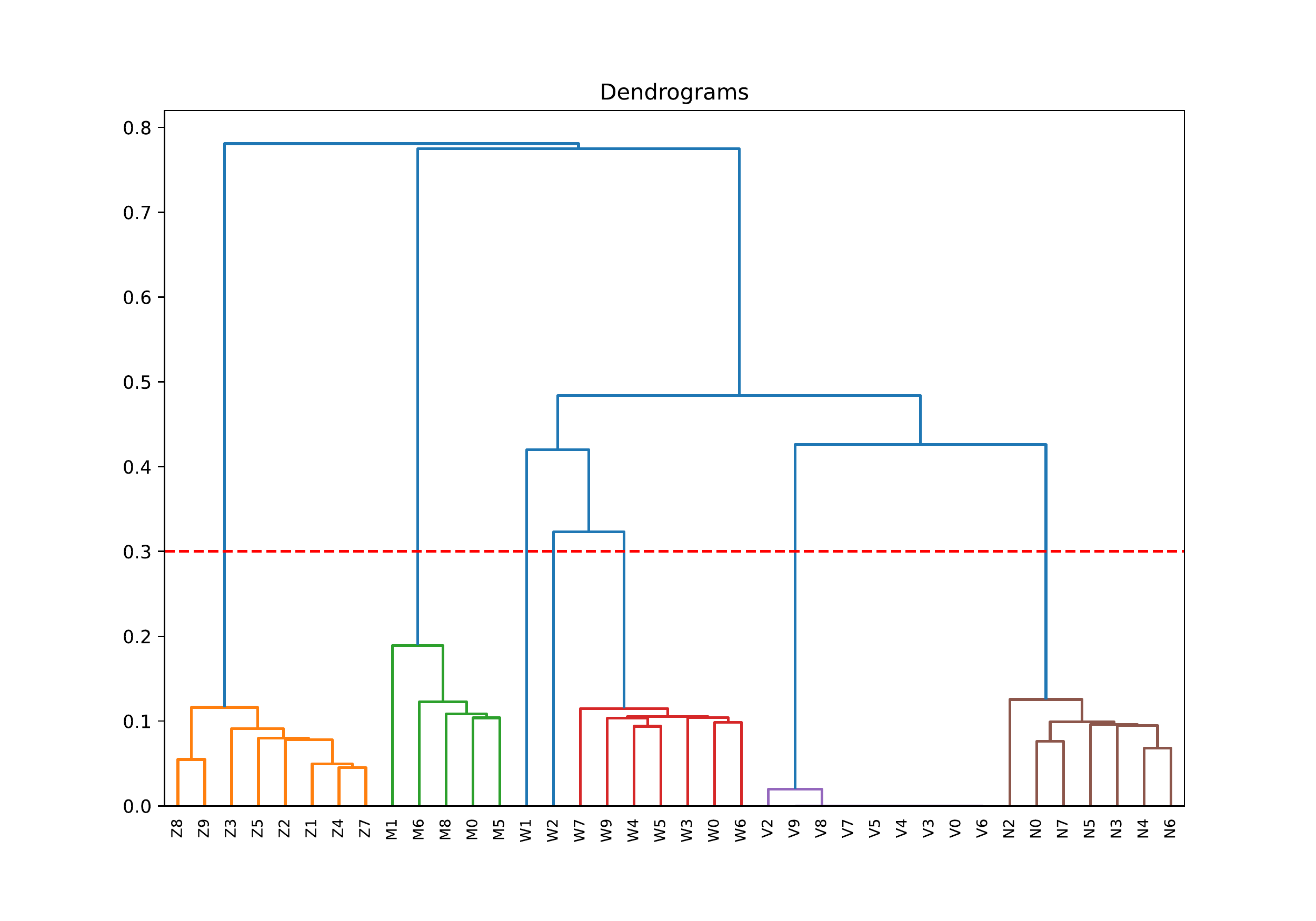}
    \caption{Clockwise from top left: A sample of letters from the IAM database; results of single-linkage hierarchical clustering on 38 letter graphs; MDS plot of the same graphs.}
    \label{fig:dendro}
\end{figure}
In this section, we describe our experiments to determine if ABD is a good measure of dissimilarity between graphs.
We are looking to see whether graphs that appear similar are considered \say{close} to each other. 
To do this, we compute the ABD between several test data sets and use cluster analysis and dimensionality reduction to visualize whether our distance function upholds this idea.
The plots shown in this section can be reproduced by running \texttt{Plots.py} from version 1.0 of our GitHub repository \cite{us}.

We specify the orientations at which the branching distance between two graphs is calculated for ABD with the term \textit{frames}. 
The \textit{n} frames used to compute ABD is determined by the \textit{n} evenly spaced orientations covering the interval $[0,2\pi)$, where \textit{n} is a positive integer parameter of the ABD function. 
For the following examples we use a small number of frames due to the simplicity of the input graphs. 
We do so because we do not anticipate significant changes between merge trees of the same graph if these graphs are rotated only slightly. 
For more intricate input graphs, such as maps, we would suggest a higher number of frames. 

We consider graphs of different letters from the IAM Graph Database \cite{tud, iam}.  
The graphs in this data set represent distorted letter drawings, so it was necessary for us to exclude graphs that we considered distorted beyond the point of recognition. 
Specifics of the process for identifying these outliers are documented in \texttt{data/DataCleaning.py} in version 1.0 of our repository.
We use ABD on two sets of these graphs to compute pairwise distances. 
Because the letter graphs are so simple, we only compute the branching distance at 10 frames.
We pass these values through a single linkage hierarchical clustering algorithm and construct a dendrogram to examine the results, as shown in Figure \ref{fig:dendro}. 
This provides a visual representation of which graphs are most similar, which is more clear than looking at a matrix of values.
As seen in the right of Figure \ref{fig:dendro}, distinct graphs of the same letter have closer links between them than graphs of different letters. 
There are clear groupings of letters visible in the dendrogram, signified by color. 
Graphs that appear similar to human eyes to have smaller distances between them, which provides support for the validity of average branching distance as a measure of dissimilarity.  
We can see these result in another format if we pass the same distance matrix through a multidimensional scaling algorithm \cite{Cox2000}, the results of which are shown in the bottom left of Figure \ref{fig:dendro}. 
We see that letters whose appearances we can verify as similar are closer to each other on the scatter plot. 

\begin{figure}
    \centering
    \subfloat[Binary images]{{\includegraphics[width=5cm]{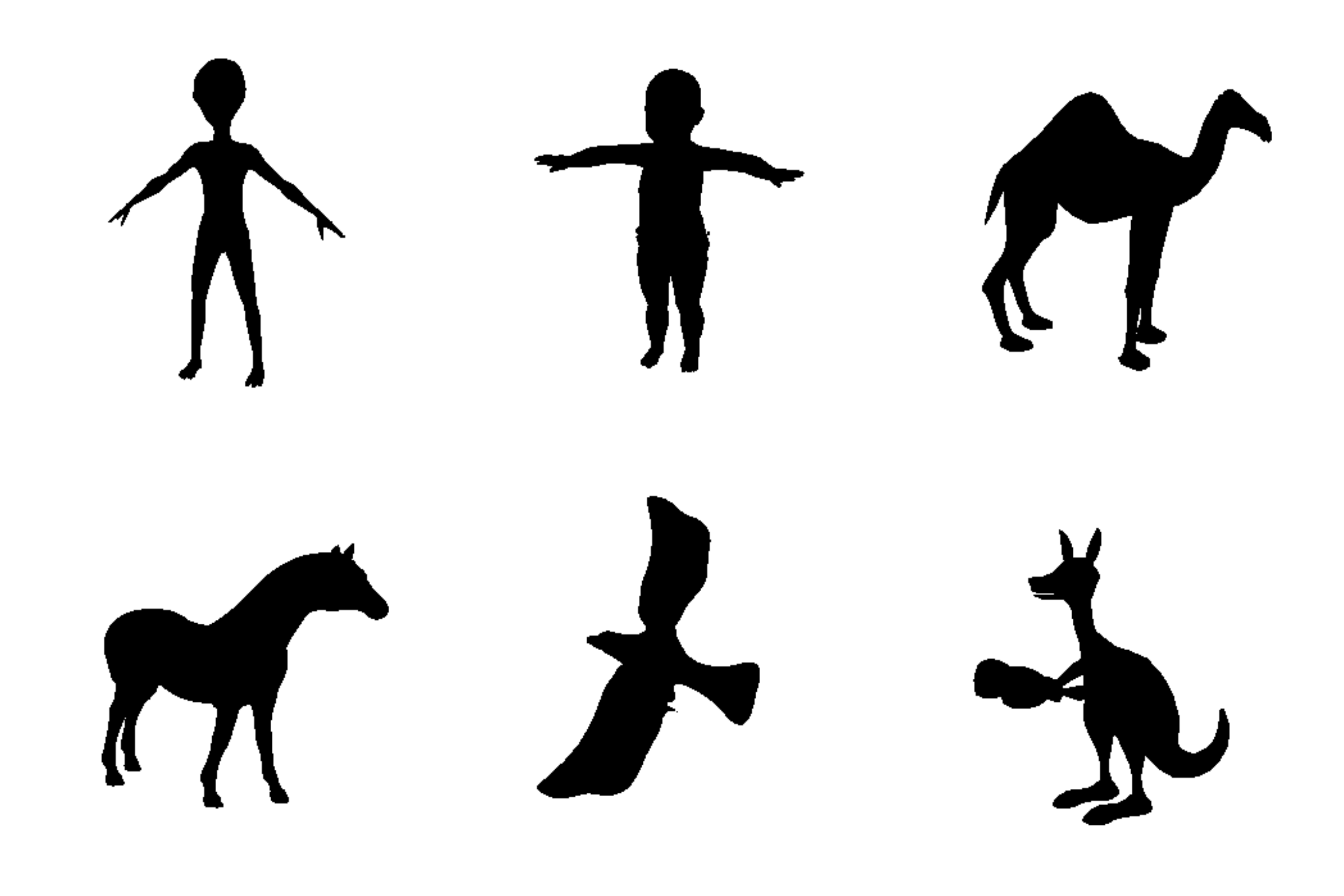} }}%
    \qquad
    \subfloat[Graphs of skeletonized images]{{\includegraphics[width=7cm]{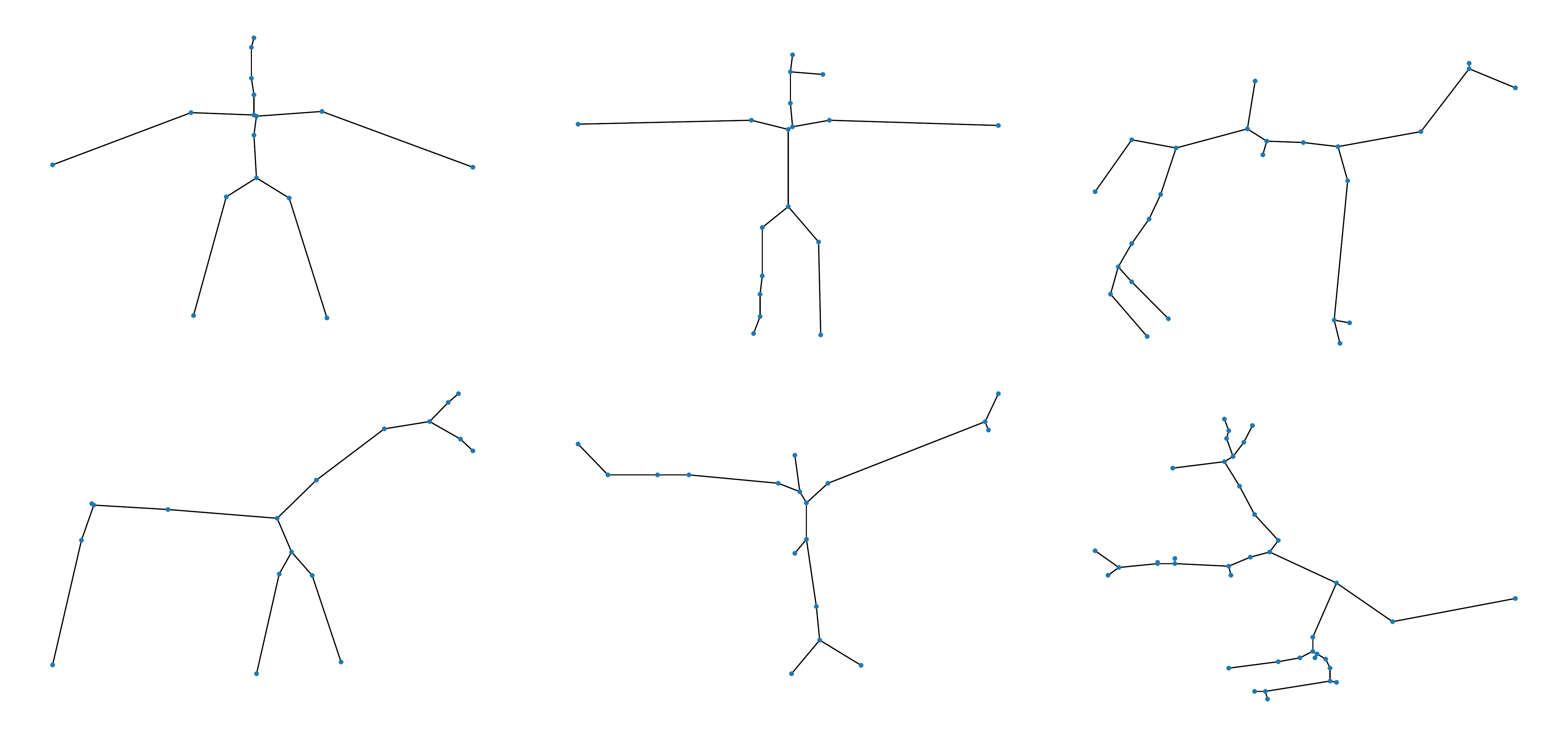} }}%
    \caption{Clockwise from top left; alien, child, camel, kangaroo, eagle, horse (not to scale) \cite{shapematcher}}%
    \label{fig:binaryandskel}%
\end{figure}

\begin{figure}
    \centering
    \subfloat[5 Frames]{{\includegraphics[width=7cm]{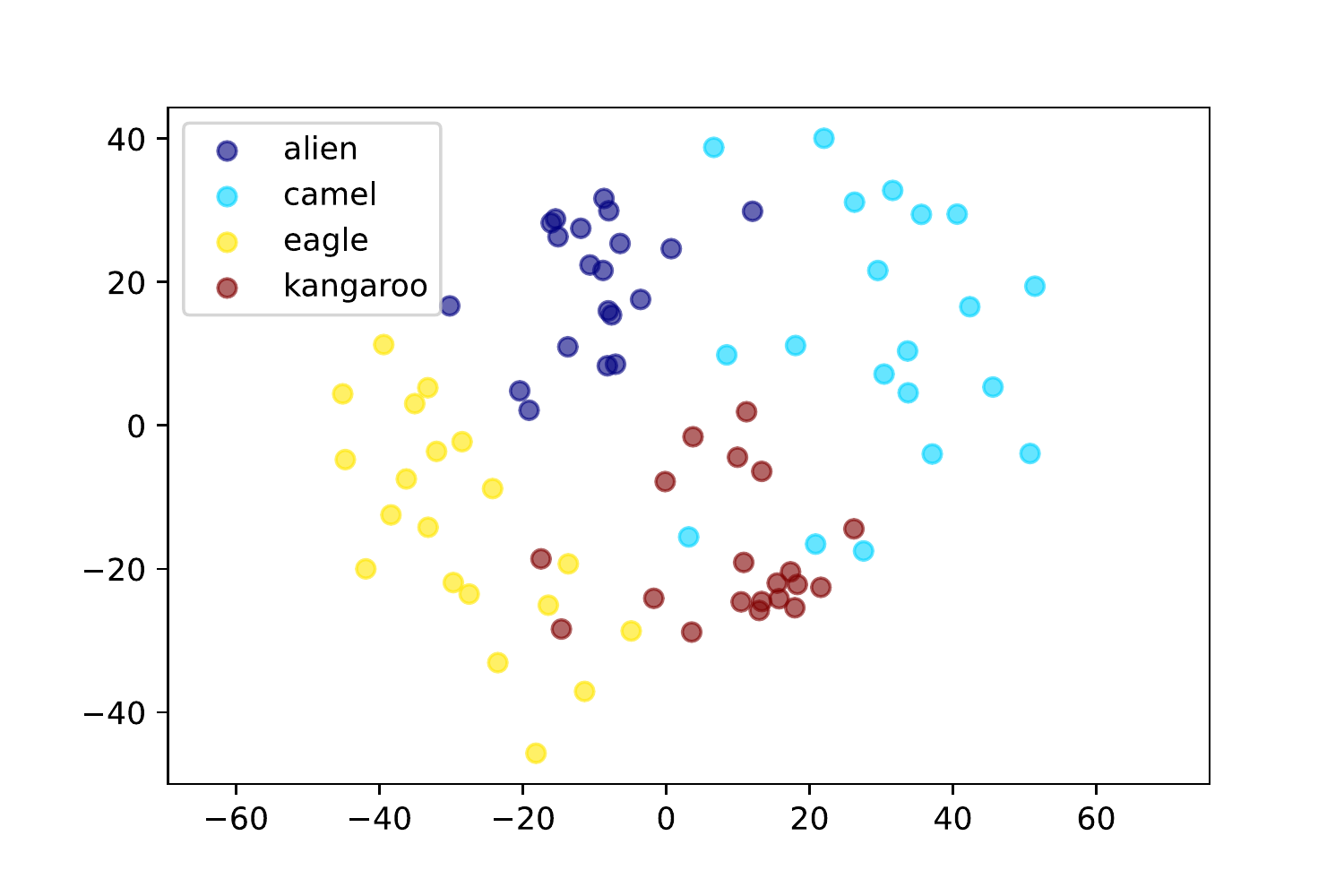} }}
    \qquad
    \subfloat[20 Frames]{{\includegraphics[width=7cm]{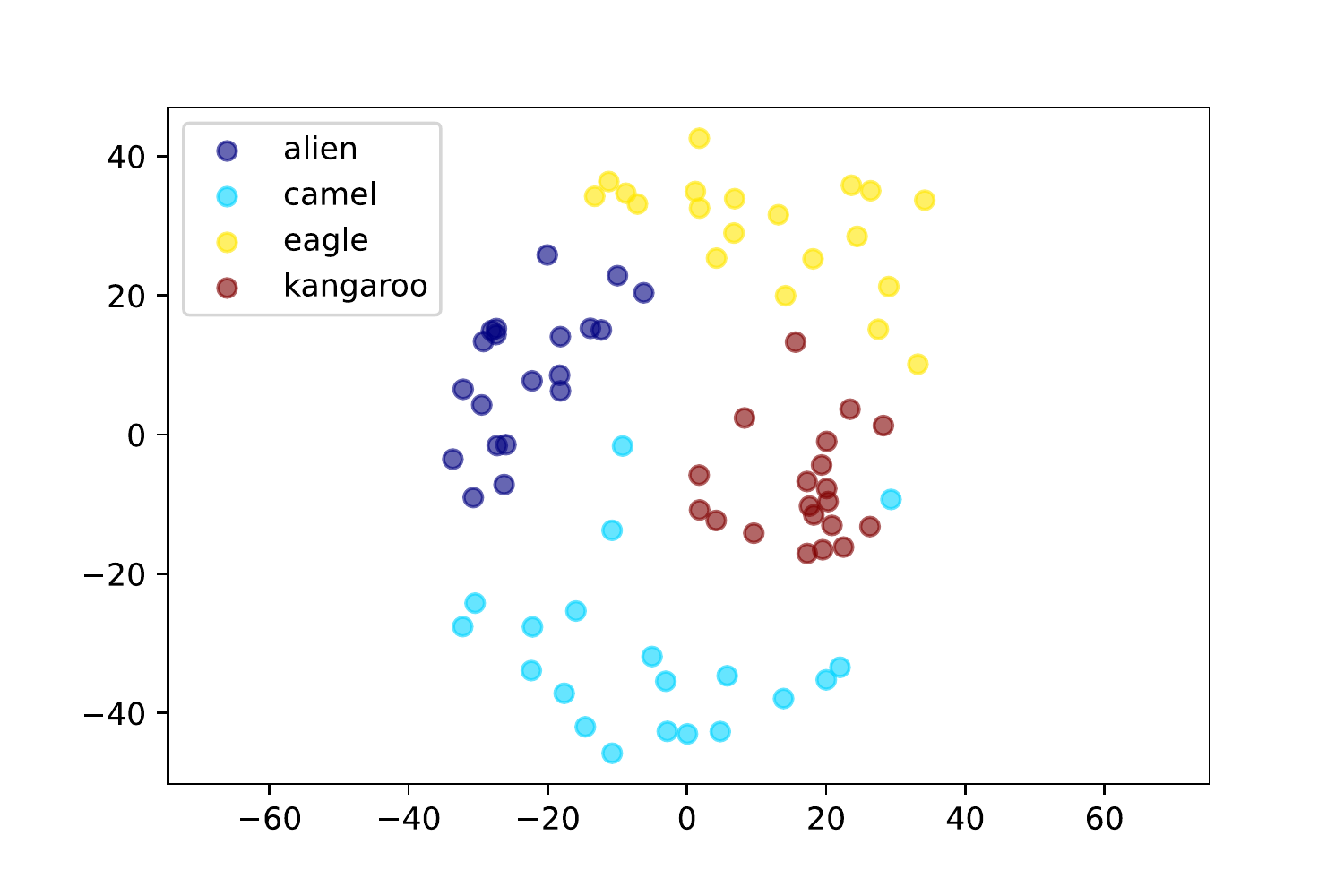} }}%
    \qquad
    \subfloat[50 Frames]{{\includegraphics[width=7cm]{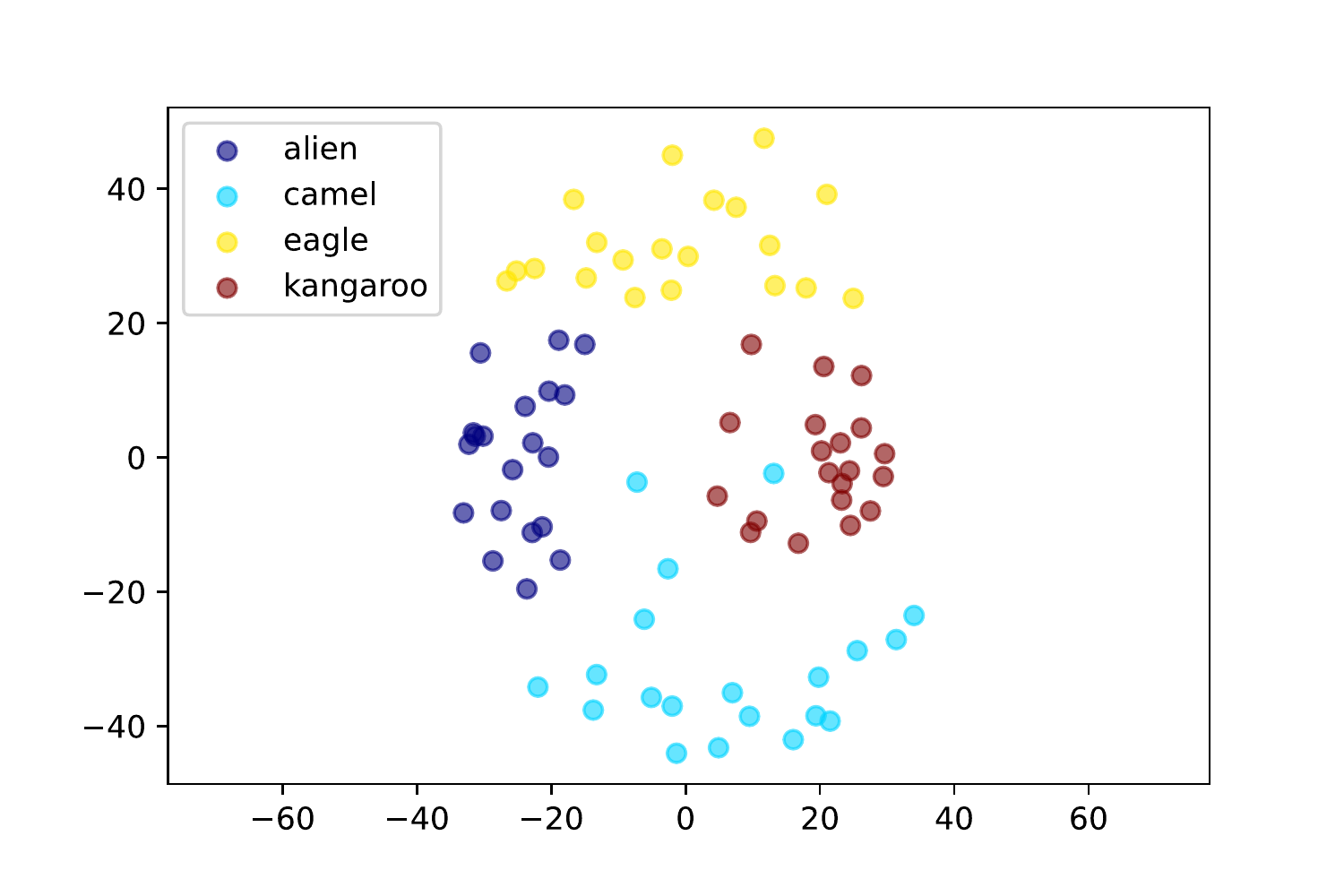} }}%
    \qquad
    \subfloat[100 Frames]{{\includegraphics[width=7cm]{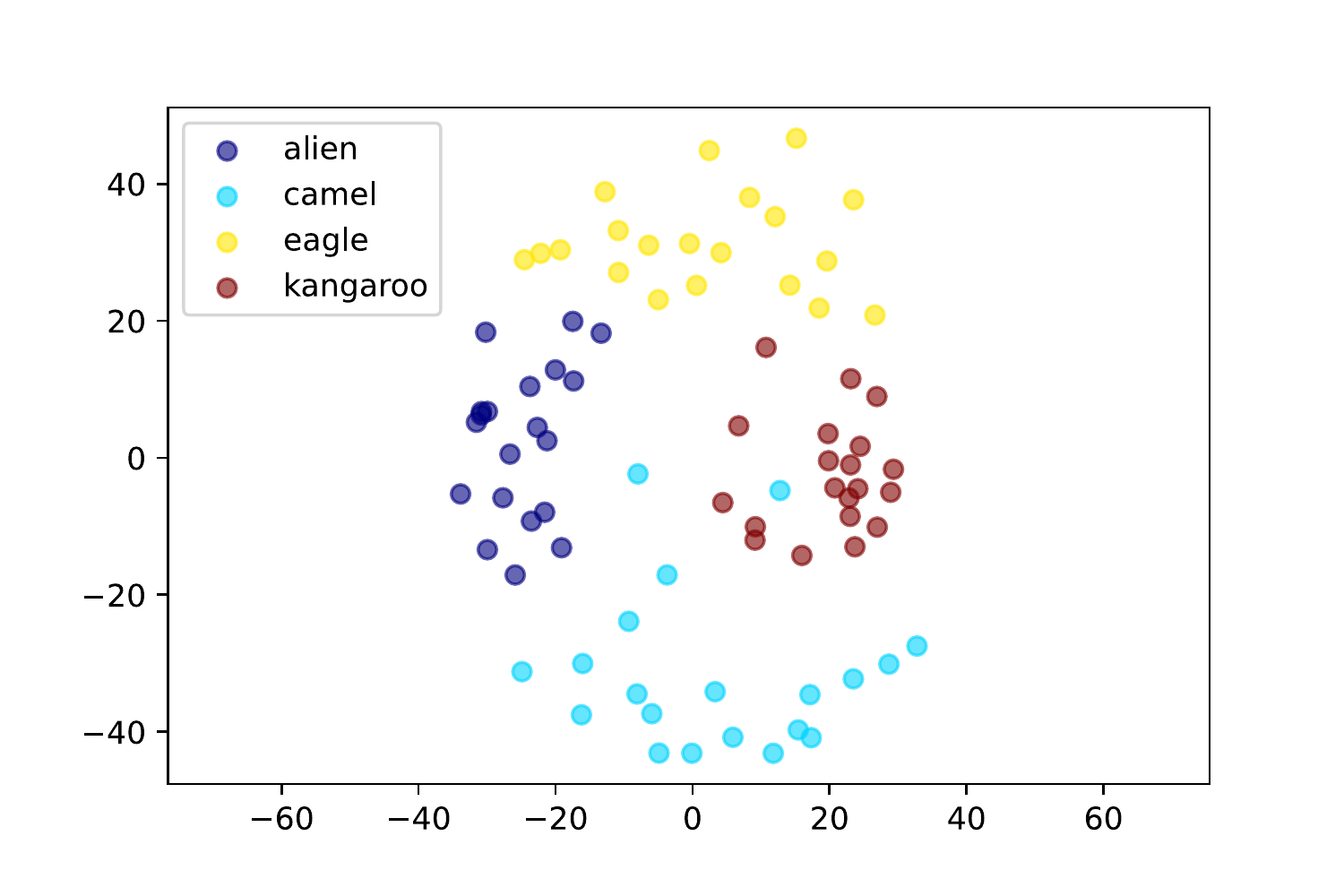} }}%
    \caption{MDS using ABD on 20 skeletons for each of four categories of images shown in Fig.~\ref{fig:binaryandskel}.}%
    \label{fig:frames}%
\end{figure}
Next we consider graphs of skeletonized binary images from the ShapeMatcher5 model dataset \cite{shapematcher}.  
ShapeMatcher is a program that can convert binary images to skeletons which can then be exported as graphs.
The graphs given by this program had a noisy structure that resulted merge trees with many small branches, which increased our runtime greatly. 
In order for our program to finish running in a reasonable amount of time we smoothed out the structures, resulting in the graphs shown in Figure \ref{fig:binaryandskel}b. 
We use ABD on 20 skeletons of 4 different images to compute pairwise distances at different numbers of frames.

In Figure \ref{fig:frames} we can see the results of MDS on four distance matrices with the same set of graphs but different matrix entries due to the number of frames used to compute each ABD. The only part of the distance matrix construction that is dependent on frames is the ABD calculation, so the runtime for the matrix construction increases linearly as the number of frames increases.

In all four plots we see similar graphs clustering together. 
When we compute ABD at 20, 50, and 100 frames, our clusters seem to be more clearly defined than when we only use 5 frames. 
We can also see that there is not much difference between scatterplots \ref{fig:frames}(b), \ref{fig:frames}(c), and \ref{fig:frames}(d). This suggests that using 20 frames yields a good estimate for the exact average branching distance between these skeletons. 

\section{Discussion}

In this paper, we have investigated the properties of the branching distance \cite{morozov}, showing that without inclusion of the infinite tail in the merge tree, the result is not a metric. 
The first interesting direction to go with this work is to see if the definition of the branching distance can be modified to take the tail into account. 
Will fixing the distance in this way propagate into a distance measure as defined on the rotated, embedded graphs? 
For the average branching distance, we would also be interested to see if, rather than working with finitely many directions, defining the distance to be some sort of integral over the branching distance in all directions would give stronger metric properties. 

The next direction for future work is to improve the computational techniques used in our code.
Our implementation makes some trade-offs between runtime and accuracy. 
First, for a fixed direction, we implement a binary search for branching distance between the resulting merge trees which leaves behind a small error. 
While in theory this could be iterated enough to find an exact distance (since only one possible value would remain), that approach was not practical.
In order to prevent an extremely large number of iterations, we choose a threshold for this error as a stopping criterion, which worked well enough in practice to give meaningful differentiation.

The second tradeoff between runtime and accuracy is caused by our choice to compute a specified set of angles rather than attempting to achieve an accurate representation of the entire range of angles $[0,2\pi)$. 
We cannot test the entire interval as there would be an unbounded number of frames, and a mathematically accurate representation would require a determination of the cheapest rooted branching pairs and how they change due to rotations for the entire interval.
It is possible that there is mathematical justification that can be undertaken for limiting the necessary number of directions, and indeed there is prior work on finding a rotation which would catch certain kinds of features in the data~\cite{micka2020} under various topological signatures (like the merge tree), although it mainly attempts to find a single direction to catch certain features. 
However, we leave this for future work as it is well beyond the scope of this project. 

Another issue is that of normalization.  We currently shift so that average vertex value is 0, which results in several instability issues.
First off, this procedure is not immune to the addition of regular vertices and as such is not uniform when viewing the merge trees as a topological space. 
Second, we normalize each pair of merge trees at every direction choice, thus potentially leading to wildly different normalization shifts from one rotation to the next. 
In future work, we would seek to remedy these issues potentially by mean-shifting the original shapes. 

Additionally, for this paper we have chosen to to simultaneously rotate the graphs rather than attempt to compare each orientation of one to each orientation of the other. 
Note that rotating the graphs simultaneously assumes some sort of ideal starting alignment between the two graphs. For example, it would not make sense to simultaneously rotate two different orientations of the same graph.
We also considered rotating only one graph, but concluded that this method would be inaccurate as it could consider irrelevant alignments.
For example, two identical objects will seem very different if only one is being rotated.
A consideration of all pairwise orientation matchings would increase the number of branching distance computations from $n$ to $n^2$ where $n$ is the number of specified frames. 
While this method may highlight different aspects of the compared graphs and, in particular, make the metric orientation invariant, it is too computationally expensive for our current implementation to handle.
However, it is possible that this modification would be of interest in applications such as shape comparison.

Finally, the data sets we considered involved only connected graphs, but in situations where we run into disconnected graphs, our implementation restricts to just the largest connected component. 
This of course will reduce the accuracy in practice, since parts of the data are  ignored.
Implementing an algorithm that alters \isepssimilar to account for disconnected components could improve our distance function. 
To do this, we would also have to define a new cost that basically accounts for deleting or adding an edge between two vertices, such that the two vertices that were connected in one graph match two vertices that were not connected in the other graph, similar to edit distance calculations between graphs.  This approach seems computationally prohibitive, but would likely lead to better comparisons if an approach can be made efficient.

\section{Acknowledgements}

This paper represents work done during Michigan State University's SURIEM REU program in Summer 2020. 
Due to the ongoing COVID-19 pandemic, all work was done remotely. 
We are grateful to the anonymous reviewer who pointed out the issue with the merge tree tails in the definition of the branching distance.
The funding for this project was supported by the National Science Foundation (NSF Award No. 1852066), the National Security Agency (NSA Grant No. H98230-20-1-0006), and Michigan State University.
The work of Erin Chambers was supported in part by NSF grants CCF-1614562, CCF-1907612, and DBI-1759807. 
The work of Elizabeth Munch was supported in part by NSF grants CCF-1907591, CCF-2106578, CCF-2142713, and  DEB-1904267.

We used the tools made available by \cite{csacademy, graphonline, shapematcher} to draw and export graph data that could be read into Python as NetworkX Graph objects. 
To examine the functionality of our code, we utilized graph data from several sources, including \href{https://www.openstreetmap.org/export}{OpenStreetMap.org}, the \href{https://earthworks.stanford.edu/}{Stanford EarthWorks Library}, and the Map Construction \cite{mapconstruction} project. %
These data do not appear in this paper but are included in our GitHub repository.

\printbibliography
\end{document}